\newtheorem{proposition}{Proposition}
\newtheorem{theorem}{Theorem}
\newtheorem{definition}{Definition}
\newtheorem{fact}{Fact}
\newtheorem{condition}{Condition}
\newtheorem{lemma}{Lemma}
\title{The Empirical Welfare Content of International Price and Income Comparisons}
\author{Hubert Wu\thanks{Department of Economics and Nuffield College, University of Oxford; \href{mailto:hubert.wu@nuffield.ox.ac.uk}{hubert.wu@nuffield.ox.ac.uk}. I thank Ian Crawford, Erwin Diewert, John Quiggan, Diane Coyle, Robert C. Feenstra, Giulio Gottardo, Alex Teytelboym, Bert M. Balk, participants at the 2025 RES Annual Conference, 2025 TADC, 2024 UNSW-ESCoE Conference on Economic Measurement, and seminar participants at the University of Oxford for comments, suggestions, and other helpful input. I also thank members of the World Bank's International Comparison Program, Development Data Group team for providing access to the data used in this paper's empirical applications.}}
\date{9 December 2025}
\begin{document}
{
\makeatletter
\addtocounter{footnote}{1} 
\renewcommand\thefootnote{\@fnsymbol\c@footnote}%
\makeatother
\maketitle
}
\begin{abstract}
\noindent Multilateral index numbers are often used to make claims about welfare, such as treating PPPs as cross-country costs of living or real incomes as indicators of living standards. However, such interpretations may not be consistent with the observed data. To study this problem, I derive multilateral bounds on welfare implied by revealed preference and use these to appraise leading comparison methods. My findings support the welfare-interpretability of the contemporary indices I examine, but not of market exchange rates. When using a welfare-consistent multilateral index, the world in 2017 appears larger and more equal vis-à-vis the United States than conventional measures.
\end{abstract}


\newpage


\section{Introduction} \label{sec:intro}

Cross-country comparisons of prices and incomes are fundamental within economics. Such assessments underpin our understanding of issues such as the relative size of economies, the fraction of the world’s population living in extreme poverty, wealth inequality among nations, and variations in economic growth rates and convergence in living standards. As is well-known however, it is not appropriate to conduct such calculations with either market exchange rates (which do not hold purchasing power constant across currencies) or classical bilateral index number formulae (which in settings involving more than two countries produce produce a multiplicity of index values for any given country pair).

A substantial literature developing and studying multilateral indices has emerged to solve this problem. For example, coherent comparisons can be constructed by considering a restricted set of comparisons through a single country (e.g. \textcite{geary_note_1958}, \textcite{khamis_new_1972}), by averaging permutations of country networks (e.g. \textcite{gini_quelques_1924}; \textcite{elteto_problem_1964}; and \textcite{szulc_indices_1964}), or from constructing minimum spanning trees (\cite{hill1999comparing}; \cite{hill2004constructing}). Consumer theory can also be brought to bear on the issue. In the ‘economic approach’ of index number theory (\cite{diewert_axiomatic_1999}), multilateral indices arise from a cost-of-living index with the assumptions of homotheticity or a fixed indifference base. Index formulae of this kind can be derived from parametric models (e.g. \textcite{cavallo2023product}) or possess the property of being `superlative' (\cite{diewert_exact_1976}) in the sense that under the appropriate assumptions they approximate a broad class of preferences to the second order.

Despite the existence of many comparison methods, a striking and uncomfortable fact lies behind them all: welfare claims based on these comparisons may be rejected by the data. Rather than being choice-consistent objects derived from behaviour, multilateral indices are statistical formulae designed to solve an aggregation and comparison problem. Even when indices possess a link to consumer theory, strong assumptions about the form of country tastes such as a single global preference relation, homotheticity, or knowledge of the explicit form of the utility function are required for a cost-of-living interpretation, with behavioural assumptions used as a means to construct or justify an index rather than being objects subject to critique. Such assumptions have long been known to be empirically suspect\footnote{For example, \textcite{deaton_understanding_2010} reflect that ``The assumption that all countries have identical homothetic tastes [is] a proposition that has been falsified by more than 150 years
of empirical demand analysis... Critics of
[a non-economic] approach [to constructing multilateral indices] argue that it leaves the welfare basis of the calculations unclear. We agree, but do not know how to do better (p. 6)''.}. It is thus an open question as to what extent, if at all, multilateral indices generate comparisons that are empirically consistent with economic welfare, and also how indices perform comparatively in this regard. This is a concerning situation because many applications of international price and income comparisons, such as measuring global living standards or disbursing foreign aid, inherently depend on a well-defined notion of economic welfare to be meaningful exercises.

This issue motivates the central question of this paper: what can one infer about the welfare content of multilateral comparisons of prices and incomes, given the diverse set of motivations, methods, assumptions, and tastes that underlie them? I answer this question by deriving the conditions that any index must satisfy for a given dataset to admit a cost of living. I find these conditions by exploiting the concept of a reference consumer (\textcite{neary_rationalizing_2004} and \textcite{crawford2008testing}), which I define and characterise as a set of tastes that rationalise the pooled data of multiple countries. The reference consumer can be thought of as a hypothetical individual who when faced with the relative prices in each of the world's countries would purchase their observed consumption bundles. These tastes in turn imply a set of sharp non-parametric bounds on the multilateral cost of living implied by revealed preference. The bounds improve upon their classical bilateral analogues and permit an appraisal of the welfare-validity of comparison methods that does not necessitate a separate index or form of preferences as an external benchmark. The test can also examine  assumptions about the form of country preferences such as homotheticity and identical cross-country tastes which are foundational to nearly all multilateral indices related to the cost of living. Finally, the bounds define a novel multilateral price index that generalises the star system of international comparisons while accounting for income effects.

My main contributions are empirical. In my first major application, I appraise the interpretability of several multilateral indices which are widely used in international comparisons in a scheme analogous to treatment condition selection in an experiment. I find that these indices are largely consistent with a cost-of-living interpretation based on the data. All the examined indices outperform market exchange rates, providing a further endorsement of index number methods that imbibe granular data about goods and services. Moreover, I find that superlative indices (\cite{diewert_exact_1976}) — ones that approximate an underlying utility function under appropriate assumptions — are frequently more empirically consistent with a multilateral cost-of-living index than non-superlative methods. However, the magnitude by which all the appraised indices exceed the permitted bounds is modest. My design also permits a novel test of the path-averaging `GEKS step' that transforms bilateral index numbers into transitive multilateral ones. I find that the effect of this procedure is small in terms of welfare validity. In the course of these findings I also document the surprising fact that a reference consumer with non-homothetic tastes accounts for nearly all output in 2017. This fact provides justification for approaches such as \textcite{diewert_exact_1976} and \textcite{diewert_axiomatic_1999} which assume global optimising behaviour in multilateral  settings based on a single preference relation and which feature widely in contemporary applications. 

Moving beyond index properties to the uses of international price and income comparisons themselves, in the paper's second application I re-calculate global output and cross-country income inequality using index values from the generalised star system based on the reference consumer and their cost-of-living bounds. I find that the world in 2017 was substantially larger and more equal than suggested by conventional measures at the time from the International Comparison Program. The practical magnitude of these differences is substantial and is roughly comparable to that arising from using PPPs instead of market exchange rates (for output), and between the pre-and post-tax Gini coefficient among OECD countries (for inequality). Despite these results, in recognition of the fact that different multilateral indices may suit different applications, in the Supplemental Appendix I also demonstrate how any desired index can be rendered welfare-consistent.

My paper is related to the literatures about index numbers, cross-country welfare, and empirical revealed preference. My main results  clarify the empirical welfare content of methods used to calculate PPPs and real incomes including those that are widely used today. This topic is closely related to the influential `economic approach' to index number theory pioneered by \textcite{diewert_exact_1976} and \textcite{diewert_axiomatic_1999} which derives and studies indices that are solutions to a utility maximisation or cost minimisation problem, and which therefore also have a cost of living interpretation. My paper is part of this literature, but alters the roles of data and theory relative to much of it as a result of the revealed preference approach I take and the philosophy that underpins it. My first application critiques existing multilateral indices which is a topic where an extensive literature exists: see for example \textcite{ackland2013measuring}, \textcite{diewert_methods_2013}, \textcite{deaton_understanding_2010}, \textcite{neary_rationalizing_2004}, \textcite{alterman_international_1999}, \textcite{diewert_axiomatic_1999}, and \textcite{hill_measuring_2000} among others. My paper complements works like these by conducting an appraisal based on economic theory that also does not require assuming a given index or utility function is true. (A notable exception is \textcite{ackland2013measuring}, which contains a comparison of two indices against the homothetic counterpart to my bounds.) Regarding cross-country welfare, my re-calculation of world output and inequality is related to a literature about new ways of measuring welfare between countries. Recent papers in this realm include \textcite{cavallo2023product}, who exploit the structure of the \textcite{melitz2003impact} model to measure the cross-country cost of living); \textcite{argente2023measuring}, who use new microdata to measure the cost of living between Mexico and the United States); and \textcite{jones2016beyond}, who derive a welfare statistic incorporating factors such as leisure, mortality, and inequality. Turning to my methods, three  papers that pioneered the use of revealed preference theory in international comparisons are \textcite{dowrick_international_1994},  \textcite{dowrick_true_1997}, and \textcite{crawford2008testing}. Of these \textcite{dowrick_true_1997} is most closely related to mine as they construct a multilateral quantity index called the Ideal Afriat Index based on the homothetic version of Afriat's Theorem (\cite{diewert1973afriat}; \cite{afriat_constructability_1981}; \cite{varian_non-parametric_1983}) and use it to study substitution bias and income convergence in 1980 for 17 countries. Despite using similar theoretical tools, I focus on a different research question, welfare object, and set of applications. Finally, \textcite{crawford2008testing} define a non-parametric reference consumer which is a critical enabling concept for my results but is different to my paper's core contributions.

The remainder of this paper is structured as follows. Section \ref{sec:intlrefcons} explains the problem of international comparisons' welfare-interpretability, derives the revealed preference bounds on the cross-country cost of living, and then defines a new index based on these ideas. Section \ref{sec:application} presents my main results. Section \ref{sec:discussion} discusses and concludes.

\section{Welfare-interpretability and the reference consumer} \label{sec:intlrefcons}

In this section I illustrate the problem of international comparisons' welfare-interpretability and explain how the perspective of a reference consumer can help study this problem. The section's main conceptual tools are multilateral revealed preference bounds on the cost of living (\ref{subsec:mult_bounds}) and an index based on these bounds that generalises the star system (\ref{subsec:fat_star_og}).

\subsection{Setting} \label{sec:notation}

The paper's setting involves $N$ countries with consumption over $K$ goods. Let $\mathbf{p}_{n}\equiv (p_{n}^{1},...,p_{n}^{K})$ and $\mathbf{q}_{n}\equiv (q_{n}^{1},...,q_{n}^{K})$ denote the prices and per-capita quantities for the $k=1,...,K$ goods for country $n$. The consumption space is $\mathcal{Q}\subseteq \mathbb{R}_{+}^{K}$. Prices $\mathbf{p}_{n}\in \mathbb{R}_{++}^{K}$ are in the units of country $n$'s local currency.

The fundamental welfare object I study is the \textcite{konus_problem_1939} cost of living, which is defined through the money metric  $e(\mathbf{p},u(\mathbf{q}))$ for a fixed indifference level $u(\mathbf{q})$. As well as being a cardinalisation of utility, the money metric is `canonical' (\cite{knoblauch_tight_1992}) in the sense that any utility function with the same family of indifference curves produces the same money metric. Expenditure functions are the building blocks of the cost of living for countries $i$ and $j$,
\begin{align} 
    K_{ij} (\mathbf{p}_{i},\mathbf{p}_{j},u(\mathbf{q})) \equiv \frac{e(\mathbf{p}_{j},u(\mathbf{q}))}{e(\mathbf{p}_{i},u(\mathbf{q}))}.
    \label{eq:konus_col}
\end{align}
\noindent As well as having a welfare interpretation grounded in consumer theory, country indices based on \eqref{eq:konus_col} are also resistant to substitution bias that arises in the form of the Gerschenkron effect of \textcite{gerschenkron1947soviet}. Cost-of-living indices can be rendered transitive for multilateral purposes by fixing an indifference base $u(\mathbf{q})$ or by assuming homothetic tastes which allows \eqref{eq:konus_col} to be written in the form $f(\mathbf{p}_{j})/f(\mathbf{p}_{i})$ which is independent of $u(\mathbf{q})$.

\subsection{Multilateral indices' welfare interpretability: an example} \label{subsec:motivating_example}

To ground the paper's motivation, Figure \ref{fig:minimal_example_1} supplies a simple illustration of the welfare-interpretability problem underlying multilateral indices using standard graphs in the field (e.g. see \textcite{hill2004constructing}). Consider computing the cost of living among four countries $A$, $B$, $C$, and $D$ for the example data. As seen in in panel (a)'s graph, for countries $A$ and $B$, the bilateral Fisher index is non-transitive (as the direct path (in cyan) $F_{AB}=1.004$ differs from the indirect $F_{AC}F_{CB}=1.085$) which renders multilateral comparisons incoherent. Taking the geometric mean of possible paths produces the `GEKS' parities of \textcite{gini_quelques_1924}; \textcite{elteto_problem_1964}; and \textcite{szulc_indices_1964} in panel (b). The GEKS procedure is widely used in contemporary applications and generates the unique, path-independent value for $G_{AB}=G_{AC}G_{CB}=1.030$. However, for the example, the GEKS parities are uninterpretable from the perspective of economic welfare (i.e. as a cost of living) because the valuation for countries $B$ and $C$ $G_{BC}=0.719$ (the red edge in (b)) exceeds the Laspeyres upper bound on the cost-of-living $L_{BC}=0.7$ implying the contradiction $e(\mathbf{p}_{C},u(\mathbf{q}_{B}))>\mathbf{p}_{C}\mathbf{q}_{B}$. One solution to this issue is to use a `star' system of bilateral Laspeyres indices in panel (c) with C featuring as the star's `hub'. This system achieves a transitive and welfare-interpretable set of parities with $L_{AB}=L_{CA}^{-1}L_{CB}=0.960$, but the arbitrarily-esteemed position of $C$'s preferences as a base for all comparisons lacks a compelling rationale in multilateral settings.

\begin{figure}[h!]
    \centering
    \captionsetup[subfigure]{justification=centering,font=small}

    \begin{subfigure}[t]{0.32\textwidth}
        \centering
        \resizebox{\linewidth}{!}{%
        \begin{tikzpicture}[
            vertex/.style={circle, draw, fill=black, inner sep=0pt, minimum size=6pt},
            blue vertex/.style={vertex, draw=blue, fill=blue},
            cyan vertex/.style={vertex, draw=cyan, fill=cyan},
            edge/.style={-{Stealth[length=2.4mm]}, bend left=20},
            blue edge/.style={edge, draw=blue},
            cyan edge/.style={edge, draw=cyan},
            weight/.style={font=\scriptsize, midway, fill=white, inner sep=1pt}
        ]
            \node[vertex] (a) at (0,0)       [label=left:$A$] {};
            \node[vertex] (b) at (3,0)       [label=right:$B$] {};
            \node[vertex] (c) at (1.5,2.6)   [label=above:$C$] {};
            \draw[cyan edge] (a) to node[weight] {$1.004$} (b);
            \draw[blue edge] (a) to node[weight] {$0.760$} (c);
            \draw[blue edge] (c) to node[weight] {$1.429$} (b);
        \end{tikzpicture}}%
        \caption{Fisher}
    \end{subfigure}
    \hfill
    \begin{subfigure}[t]{0.32\textwidth}
        \centering
        \resizebox{\linewidth}{!}{%
        \begin{tikzpicture}[
            vertex/.style={circle, draw, fill=black, inner sep=0pt, minimum size=6pt},
            edge/.style={-{Stealth[length=2.4mm]}, bend left=20, draw=ForestGreen},
            green edge/.style={edge, draw=ForestGreen},
            red edge/.style={edge, draw=red},
            weight/.style={font=\scriptsize, midway, fill=white, inner sep=1pt}
        ]
            \node[vertex] (a) at (0,0)       [label=left:$A$] {};
            \node[vertex] (b) at (3,0)       [label=right:$B$] {};
            \node[vertex] (c) at (1.5,2.6)   [label=above:$C$] {};
            \draw[green edge] (a) to node[weight] {$1.030$} (b);
            \draw[green edge] (a) to node[weight] {$0.740$} (c);
            \draw[green edge, bend right = 20] (c) to node[weight] {$1.392$} (b);
            \draw[red edge, bend right = 20] (b) to node[weight] {$0.719$} (c);
        \end{tikzpicture}}%
        \caption{GEKS}
    \end{subfigure}
    \hfill
    \begin{subfigure}[t]{0.32\textwidth}
        \centering
        \resizebox{\linewidth}{!}{%
        \begin{tikzpicture}[
            vertex/.style={circle, draw, fill=black, inner sep=0pt, minimum size=6pt},
            edge/.style={-{Stealth[length=2.4mm]}, bend left=20, draw=Purple},
            purple edge/.style={edge, draw=Purple},
            red edge/.style={edge, draw=red},
            weight/.style={font=\scriptsize, midway, fill=white, inner sep=1pt}
        ]
            \node[vertex] (a) at (0,0)       [label=left:$A$] {};
            \node[vertex] (b) at (3,0)       [label=right:$B$] {};
            \node[vertex] (c) at (1.5,2.6)   [label=above:$C$] {};
            \draw[purple edge] (c) to node[weight] {$1.489$} (a);
            \draw[purple edge, bend right=20] (c) to node[weight] {$1.429$} (b);
        \end{tikzpicture}}%
        \caption{Single-base Laspeyres}
    \end{subfigure}

    \caption{\textbf{Conceptual difficulties regarding multilateral comparisons of cross-country PPPs}. The data for this example are $ \mathbf{p}_{A}\!\equiv\!(5,9),\; \mathbf{p}_{B}\!\equiv\!(7,7),\; \mathbf{p}_{C}\!\equiv\!(10,10),\; \mathbf{q}_{A}\!\equiv\!(8,6),\; \mathbf{q}_{B}\!\equiv\!(7,10),\; \mathbf{q}_{C}\!\equiv\!(1,9) $. Index values for each panel are the respective directed graph's edge weights. (a) The Fisher parities are path-dependent. (b) The GEKS parities are transitive but cannot be interpreted as a cost-of-living index compatible with the data. (c) The Laspeyres parities with country C as a base achieves transitivity and welfare-interpretability at the cost of using country C's tastes in all pairwise comparisons.}
    \label{fig:minimal_example_1}
\end{figure} 

\subsection{The international reference consumer} \label{sec:irc}

The third panel in Figure \ref{fig:minimal_example_1} highlights a major reason for why studying the interpretability issue with the cost of living has historically been difficult (for example, see the discussion in \textcite{Pollak1971a}, \textcite{collier_comment_1999}, and \textcite{neary_rationalizing_2004}): in a multilateral setting, there are many answers to the question `whose preferences should we use to conduct a comparison?' 

To begin surmounting this problem, I leverage the idea of a reference consumer (à la \textcite{neary_rationalizing_2004}) whose behaviour is consistent with the data of multiple countries and whose tastes can therefore be used as the indifference base for the money metric. Definition \ref{def:ref_consumer} states this idea in terms of a rationalising preference.

\begin{definition}[The reference consumer] \label{def:ref_consumer}
There exists a \emph{reference consumer} for the pooled country dataset
    \[
    \mathcal{D} := \left\{ \left(\mathbf{p}_{i}, \mathbf{q}_{i}\right) \mid i \in \{1, 2, \ldots, N\} \right\}.
    \]
    if there exists a utility function $u:\mathcal{Q}\rightarrow\mathbb{R}$
    where $u(\mathbf{q}_{i})\ge u(\mathbf{q})$ for all
    $\mathbf{q}\in \mathcal{Q}$ such that $\mathbf{p}_{i}\cdot\mathbf{q}\le\mathbf{p}_{i}\cdot\mathbf{q}_{i}$.
\end{definition}
The reference consumer is an individual whose behaviour is consistent with the collective data of multiple countries. Using these tastes to calculate the indifference base for a multilateral cost-of-living index is an attractive answer to the question of `whose preferences should we use?' because doing so obviates the problem of privileging a single country's preferences in all comparisons. In contrast to \textcite{neary_rationalizing_2004}'s concept, the tastes in Definition \ref{def:ref_consumer} do not require assumptions about the functional form of utility or econometric estimation from the data. Definition \ref{def:ref_consumer} also leads to improvements over classical  cost-of-living bounds, underpin an index appraisal scheme, and define a new non-homothetic index based on revealed preference. These benefits do not arise in \textcite{neary_rationalizing_2004}'s seminal characterisation.

How do we know whether a reference consumer exists for a given dataset? Answering this question is critical for investigating indices' welfare content so to do this, I now proceed to characterise the reference consumer's existence using graphs induced by calculable index numbers (as in the illustrative example at Figure \ref{fig:minimal_example_1}). I first review some basic definitions from graph theory required for the result. A \textit{weighted directed graph} is a triple $G=(V,E,w)$ comprised of a non-empty set $V$ of vertices, a set of directed edges $E\subseteq \{(\upsilon_{i},\upsilon_{j})|(\upsilon_{i},\upsilon_{j})\in V\times V \text{ and } \upsilon_{i}\neq \upsilon_{j}\}$, a weight function $w:E\rightarrow \mathbb{R}$ that assigns a real number to each  $(\upsilon_{i},\upsilon_{j})\in E$. A \textit{walk} in $G=(V,E,w)$ is a sequence of the form $\upsilon_{0} (\upsilon_{0},\upsilon_{1}) \upsilon_{1} (\upsilon_{1},\upsilon_{2}) \upsilon_{2}...(\upsilon_{k-1},\upsilon_{k}) \upsilon_{k}$ with $\upsilon_{i}\in V$ and $(\upsilon_{i},\upsilon_{i+1}) \in E$ for all $0\leq i<k$. A \textit{closed walk} in $G=(V,E,w)$ is a walk with $\upsilon_{0}=\upsilon_{k}$. Finally, a \textit{cycle} in $G=(V,E,w)$ is a closed walk with distinct vertices except for $\upsilon_{0}=\upsilon_{k}$. 

To proceed, consider the weighted directed graph $G=(V,E,w)$ associated with the pooled dataset $\mathcal{D}$ defined as follows. The set of vertices $V$ comprises observations with countries indexed by $\{1,...,N\}$. Each edge weight $w_{mn}$ is the bilateral Laspeyres quantity index for countries $m$ and $n$ as the reference and comparison periods,
    \begin{align}
        w_{mn}\equiv \frac{\mathbf{p}_{m}.\mathbf{q}_{n}}{\mathbf{p}_{m}.\mathbf{q}_{m}}
    \label{eq:paasche_edge_weight}
    \end{align}

With $G$ now defined we will consider the presence of certain cycles within it.
\begin{condition}[CEWEC]
    The \emph{Cycle Edge Weight Equality Condition (CEWEC)} is satisfied if, for every cycle in $G$ with vertices $\upsilon_1, \upsilon_2, \dots, \upsilon_k$ where the associated edge weights $w(\upsilon_\ell, \upsilon_{\ell+1}) \leq 1$ for $\ell = 1, 2, \dots, k$ (where $\upsilon_{k+1} = \upsilon_1$), it holds that $w(\upsilon_\ell, \upsilon_{\ell+1}) = 1$ for all $\ell$.
    \label{cond:cewec}
\end{condition}
The CEWEC arises in many graph-based settings within mechanism design, game theory, and finance. In the revealed preference literature it is equivalent to the `cyclical consistency' property of \textcite{afriat_construction_1967} typically applied to individual consumption datasets.

I am now ready to characterise the reference consumer and do so in Proposition \ref{prop:ref_consumer_characterisation}.

\begin{proposition} Consider the weighted directed graph $G=(V,E,w)$ associated with $\mathcal{D}$ with country vertices and Laspeyres (quantity) weights. The following statements are equivalent:
    \begin{enumerate}
        \item There exists a reference consumer for $\mathcal{D}$ whose tastes $u(\mathbf{q})$ are non-satiated, continuous, concave, and strictly monotonic.
        \item $G$ satisfies the CEWEC.
    \end{enumerate}
    \label{prop:ref_consumer_characterisation}
\end{proposition}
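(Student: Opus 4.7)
The plan is to translate the graph-theoretic CEWEC into the standard cyclical consistency (equivalently, GARP) condition on the pooled dataset $\mathcal{D}$ when treated as the observations of a single consumer, and then invoke Afriat's theorem to close the characterisation. First I would note that by construction $w_{mn}\leq 1$ is equivalent to $\mathbf{p}_{m}\cdot\mathbf{q}_{n}\leq \mathbf{p}_{m}\cdot\mathbf{q}_{m}$, i.e.\ that $\mathbf{q}_{n}$ was affordable when $\mathbf{q}_{m}$ was chosen. In revealed-preference language this says $\mathbf{q}_{m}$ is directly revealed weakly preferred to $\mathbf{q}_{n}$, and the strict inequality $w_{mn}<1$ corresponds to strict direct revealed preference. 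A cycle in $G$ whose edge weights are each at most one is therefore precisely a revealed-preference cycle of the pooled data, and CEWEC is the statement that no such cycle can contain a strict preference — i.e.\ the cyclical consistency property of \textcite{afriat_construction_1967} applied to $\mathcal{D}$ viewed as a single consumer's choices.

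For $(2)\Rightarrow (1)$, assume CEWEC. By the translation above, $\mathcal{D}$ satisfies cyclical consistency. Applying Afriat's theorem in the form that guarantees a non-satiated, continuous, concave, and strictly monotonic rationalising utility — as in \textcite{varian_non-parametric_1983} — I obtain a function $u:\mathcal{Q}\rightarrow\mathbb{R}$ rationalising $\mathcal{D}$ in the sense of Definition \ref{def:ref_consumer}. This $u$ is the sought reference consumer.

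For $(1)\Rightarrow (2)$, suppose a reference consumer with utility $u$ of the stated type exists and take any cycle $\upsilon_{1},\upsilon_{2},\ldots,\upsilon_{k},\upsilon_{1}$ in $G$ with all edge weights at most one. Each $w(\upsilon_{\ell},\upsilon_{\ell+1})\leq 1$ puts $\mathbf{q}_{\upsilon_{\ell+1}}$ inside $\mathbf{q}_{\upsilon_{\ell}}$'s budget set, so by rationalisation $u(\mathbf{q}_{\upsilon_{\ell}})\geq u(\mathbf{q}_{\upsilon_{\ell+1}})$. If additionally $w(\upsilon_{\ell},\upsilon_{\ell+1})<1$, then $\mathbf{q}_{\upsilon_{\ell+1}}$ lies strictly inside $\mathbf{q}_{\upsilon_{\ell}}$'s budget, and strict monotonicity (with continuity) lets me perturb it to an affordable bundle that strictly dominates $\mathbf{q}_{\upsilon_{\ell+1}}$, yielding $u(\mathbf{q}_{\upsilon_{\ell}})>u(\mathbf{q}_{\upsilon_{\ell+1}})$. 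Chaining these inequalities around the cycle and using $\upsilon_{k+1}=\upsilon_{1}$ forces the contradiction $u(\mathbf{q}_{\upsilon_{1}})>u(\mathbf{q}_{\upsilon_{1}})$ unless every weak inequality is an equality — i.e.\ every edge weight equals one, as CEWEC requires.

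The main obstacle I anticipate is the strict-to-strict step in the $(1)\Rightarrow (2)$ direction: ensuring that $w<1$ produces a strict utility inequality requires genuine strict monotonicity rather than mere local non-satiation, since otherwise one cannot upgrade a strictly feasible bundle to a strictly preferred feasible neighbour. A secondary subtlety is invoking Afriat's theorem cleanly for a pooled dataset whose observations come from heterogeneous countries, but this is unproblematic since Afriat's construction does not presuppose a common decision-maker generating the data — it is precisely the content of the theorem that cyclical consistency alone suffices to construct the rationalising utility.
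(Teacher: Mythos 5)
Your proposal is correct and follows essentially the same route as the paper's own proof: both directions reduce CEWEC to Afriat's cyclical consistency, invoke Afriat's Theorem for the construction of the rationalising utility, and establish the converse by upgrading a strict budget inequality to a strict utility inequality (via non-satiation/monotonicity) and chaining around the cycle to a contradiction. No gaps.
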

\noindent \textbf{Proof.} See \hyperref[proof:ref_consumer_characterisation]{\textbf{Appendix}}. \\

Proposition \ref{prop:ref_consumer_characterisation} is a graph-theoretic statement of Afriat's Theorem (\cite{afriat_construction_1967}) -- equivalently, the Generalised Axiom of Revealed Preference (GARP). Since this result is a very well-established one, a remark is warranted about what new value is derived from casting it in graph-theoretic terms. Invoking graph theory facilitates the development of ideas jointly related to both index number and revealed preference theory. Within the index number literature, graph theory is a well-established framework for analysing the structure and properties of multilateral indices where the matter of how individual countries' data are related in a network is a first-order research concern (for example, see  \textcite{hill2004constructing}). In contrast, the revealed preference literature seldom examines the identity of and relationship between individual observations within a dataset as a central object of interest.

This fact becomes more concrete when examining the actual content of Proposition \ref{prop:ref_consumer_characterisation}. First, in contrast to the individual prices and quantities typically used to state and test the theorem, Proposition \ref{prop:ref_consumer_characterisation} illustrates that GARP can be tested with alternative data: sufficient bilateral Laspeyres index numbers to construct $G$. Additionally, in terms of its input data, Proposition \ref{prop:ref_consumer_characterisation}'s rationalisation of a group of countries rather than a single one has few precedents in the revealed preference literature (a notable exception is \textcite{crawford_how_2013}) which when ported to the index number setting is the concept that allows for a welfare appraisal of multilateral formulae at the heart of this paper. Finally, it is very easy to see how Proposition \ref{prop:ref_consumer_characterisation} operates in light of the graphs involved. To this end, Figure \ref{fig:cewec_examples} presents some examples of $\mathcal{D}$ and their associated graphs to illustrate the connection between the data, the CEWEC, and the reference consumer.

\begin{figure}[htbp]
    \centering
    \captionsetup[subfigure]{justification=centering,font=small} 
    \centering

    \begin{subfigure}[t]{0.32\textwidth}
        \centering
        \resizebox{\linewidth}{!}{
        \begin{tikzpicture}[
            vertex/.style={circle, draw, fill=black, inner sep=0pt, minimum size=6pt},
            blue vertex/.style={vertex, draw=blue, fill=blue},
            cyan vertex/.style={vertex, draw=cyan, fill=cyan},
            edge/.style={-{Stealth[length=2.4mm]}, bend left=20},
            blue edge/.style={edge, draw=blue},
            weight/.style={font=\scriptsize, midway, fill=white, inner sep=1pt}
        ]
            \node[vertex] (a) at (0,0)  [label=left:$V_{1}$] {};
            \node[vertex] (b) at (3,0)  [label=right:$V_{2}$] {};
            \node[vertex] (c) at (1.5,2.6) [label=above:$V_{3}$] {};
            \draw[edge]      (a) to node[weight] {$33.67$} (b);
            \draw[edge]      (b) to node[weight] {$0.51$}  (a);
            \draw[edge]      (a) to node[weight] {$336.67$}(c);
            \draw[edge]      (c) to node[weight] {$0.10$}  (a);
            \draw[edge]      (b) to node[weight] {$500.05$}(c);
            \draw[edge]      (c) to node[weight] {$5.00$}  (b);
        \end{tikzpicture}}%
        \caption{$\mathbf{p}_{1}\!\equiv\!(1,1),\mathbf{p}_{2}\!\equiv\!(10,0.1),\mathbf{p}_{3}\!\equiv\!(0.1,10)$, 
                 $\mathbf{q}_{1}\!\equiv\!(1,2),\mathbf{q}_{2}\!\equiv\!(1,100),\mathbf{q}_{3}\!\equiv\!(1000,10)$}
    \end{subfigure}
    \hfill
    \begin{subfigure}[t]{0.32\textwidth}
        \centering
        \resizebox{\linewidth}{!}{%
        \begin{tikzpicture}[
            vertex/.style={circle, draw, fill=black, inner sep=0pt, minimum size=6pt},
            red vertex/.style={vertex, draw=red, fill=red},
            cyan vertex/.style={vertex, draw=cyan, fill=cyan},
            edge/.style={-{Stealth[length=2.4mm]}, bend left=20},
            red edge/.style={edge, draw=red},
            weight/.style={font=\scriptsize, midway, fill=white, inner sep=1pt}
        ]
            \node[red vertex] (a) at (0,0)  [label=left:$V_{1}$] {};
            \node[red vertex] (b) at (3,0)  [label=right:$V_{2}$] {};
            \node[red vertex] (c) at (1.5,2.6)[label=above:$V_{3}$] {};
            \draw[edge]      (a) to node[weight] {$1.05$} (b);
            \draw[red edge]  (b) to node[weight] {$0.88$}(a);
            \draw[red edge]  (a) to node[weight] {$0.95$}(c);
            \draw[edge]      (c) to node[weight] {$1.04$}(a);
            \draw[edge]      (b) to node[weight] {$1.48$}(c);
            \draw[red edge]  (c) to node[weight] {$0.96$}(b);
        \end{tikzpicture}}%
        \caption{$\mathbf{p}_{1}\!\equiv\!(2.5,4.5,2),\mathbf{p}_{2}\!\equiv\!(3.5,1,5.5),\mathbf{p}_{3}\!\equiv\!(5.5,3,2.5)$, 
                 $\mathbf{q}_{1}\!\equiv\!(5,3.5,1),\mathbf{q}_{2}\!\equiv\!(3.5,4,2.5),\mathbf{q}_{3}\!\equiv\!(3.5,2,5.5)$}
    \end{subfigure}
    \hfill
    \begin{subfigure}[t]{0.32\textwidth}
        \centering
        \resizebox{\linewidth}{!}{%
        \begin{tikzpicture}[
            vertex/.style={circle, draw, fill=black, inner sep=0pt, minimum size=6pt},
            blue vertex/.style={vertex, draw=blue, fill=blue},
            cyan vertex/.style={vertex, draw=cyan, fill=cyan},
            edge/.style={-{Stealth[length=2.4mm]}, bend left=20},
            blue edge/.style={edge, draw=blue, thick},
            cyan edge/.style={edge, draw=cyan, thick},
            red edge/.style={edge, draw=red},
            weight/.style={font=\scriptsize, midway, fill=white, inner sep=0pt}
        ]
            \node[blue vertex] (a) at (0,0) [label=left:$V_{1}$] {};
            \node[blue vertex] (b) at (3,0) [label=right:$V_{2}$] {};
            \node[cyan vertex] (c) at (3,3) [label=right:$V_{3}$] {};
            \node[cyan vertex] (d) at (0,3) [label=left:$V_{4}$] {};
            \draw[blue edge] (a) to node[weight] {$2.00$} (b);
            \draw[blue edge] (b) to node[weight] {$0.50$} (a);
            \draw[edge]      (a) to node[weight] {$0.20$} (c);
            \draw[edge]      (c) to node[weight] {$5.00$} (a);
            \draw[red edge]  (a) to node[weight] {$0.80$} (d);
            \draw[red edge]  (d) to node[weight] {$0.97$} (a);
            \draw[edge]      (b) to node[weight] {$0.10$} (c);
            \draw[edge]      (c) to node[weight] {$10.00$}(b);
            \draw[edge]      (b) to node[weight] {$0.47$} (d);
            \draw[edge]      (d) to node[weight] {$1.94$} (b);
            \draw[cyan edge] (c) to node[weight] {$3.33$} (d);
            \draw[cyan edge] (d) to node[weight] {$0.19$} (c);
        \end{tikzpicture}}%
        \caption{$\mathbf{p}_{1}\!\equiv\!(10,10),\mathbf{p}_{2}\!\equiv\!(1,2),\mathbf{p}_{3}\!\equiv\!(10,5),\mathbf{p}_{4}\!\equiv\!(4,15),\;
                 \mathbf{q}_{1}\!\equiv\!(5,5),\mathbf{q}_{2}\!\equiv\!(10,10),\mathbf{q}_{3}\!\equiv\!(1,1),\mathbf{q}_{4}\!\equiv\!(2,6)$}
    \end{subfigure}

    \caption{Examples of $\mathcal{D}$ and $G$. (a) The three-country numerical example from \textcite{diewert_axiomatic_1999}; here, a reference consumer exists. (b) A three-country case with $K=3$ where a reference consumer does not exist. The cycle in $G$ violating the CEWEC involves all three countries and is highlighted in \textcolor{red}{red}. (c) An example with $N=4,K=2$ where a reference consumer exists for two distinct country pairs (coloured \textcolor{cyan}{cyan} and \textcolor{blue}{blue}), but where the CEWEC is nonetheless violated for $G$ on the whole.} 
    \label{fig:cewec_examples}
\end{figure}

Before developing the consequences Proposition \ref{prop:ref_consumer_characterisation}, several points of clarification about the interpretation of the reference consumer are warranted. First, Proposition \ref{prop:ref_consumer_characterisation} does not say that the countries in the data actually possess the rationalising preference either individually or as a group. Rather, the result permits treating the data  \textit{as if} the countries in question behaved in accordance with these tastes without loss of generality. This is a necessary condition for interpreting multilateral indices as cost-of-living indices from some common taste, but is not sufficient for such an interpretation to be normatively compelling. This raises the question of welfare representativeness: just because we can treat multiple countries as if their consumption arose from a single set of tastes, is it meaningful to do so in order to make welfare statements? The assumption I make in this paper is that such an exercise can indeed be useful. Although this stance cannot be tested within the paper itself, it is worth noting that essentially the entirety of the literature about cost-of-living-based multilateral indices and cross-country welfare measures (including modern studies such as \textcite{jones2016beyond}, \textcite{cavallo2023product}, \textcite{argente2023measuring}, among many others) assume a global preference so are subject to this and related critiques. 

Second, the international reference consumer should not be misconstrued with a \textit{representative} agent where sub-aggregate utility functions are required to satisfy the relevant aggregation conditions (e.g. \cite{gorman_community_1953}). This is a more restrictive concept than the reference consumer's tastes which do not purport to represent an aggregate of countries. 

Finally, as illustrated by panel (c) in Figure \ref{fig:cewec_examples}, there is no guarantee that the condition in Proposition \ref{prop:ref_consumer_characterisation} will hold for a given dataset. This fact is what permits an appraisal of the taste assumptions underlying existing multilateral indices, but comes with the shortcoming that a dataset may not be compatible in its entirety with a reference consumer (and thus a multilateral cost-of-living index). This possibility may appear to create problems for constructing a multilateral index. Section \ref{subsec:fat_star_og} explains why this is not the case.

\subsection{Multilateral bounds on the reference consumer's cost of living} \label{subsec:mult_bounds}

I now derive multilateral cost-of-living bounds that must hold for any reference consumer. Some preliminary concepts are again required to articulate this idea.

\paragraph{Ordered walks and country sets.} Suppose we have constructed a reference consumer from the data using the graph $G$. To obtain implications for the multilateral cost of living, first consider a descending-index walk $G$ may contain and the countries reachable by it.

\begin{definition}[Revealed preference walks]\label{def:rp_walk}
A walk $(v_0,\dots,v_k)$ in $G=(V,E,w)$ is an
\emph{RP‑walk} if
\begin{align}
    w(v_{j-1},v_j)\le 1 \quad\text{for every } j=1,\dots,k. \label{eq:rp_walk}
\end{align}
We write $uRv$ when there exists an RP‑walk from vertex $u$ to
vertex $v$. If at least one inequality in \eqref{eq:rp_walk} does not hold with equality the walk is a \emph{strict RP‑walk}, denoted $uPv$.
\end{definition}

\begin{definition}[Revealed preference country sets]\label{def:rp_sets}
Using the relation R from Definition \ref{def:rp_walk}, for a vertex $v\in V$ define
\begin{align}
  \operatorname{VRP}(v) &= \{\,u\in V : u R v\,\}, \label{eq:VRP}\\
  \operatorname{VRW}(v) &= \{\,u\in V : v R u\,\}. \label{eq:VRW}
\end{align}
\end{definition}

Definition \ref{def:rp_walk} describes the relative size of country $i$'s and $0$'s real output by a sequence of intermediate price valuations. The walks have a standard revealed preference interpretation. Since $w(\upsilon_{i},\upsilon_{j})\leq 1$ implies that country $j$'s demand is affordable at country $i$'s but was not chosen, by the Weak Axiom of Revealed Preference (WARP) $\mathbf{q}_{i}$ is revealed preferred to $\mathbf{q}_{j}$. The walks extend this idea beyond individual country pairs to longer multilateral paths.

Collecting the reachable countries to and from an arbitrary vertex $\upsilon_{0}$ then leads to the sets in Definition \ref{def:rp_sets}: \eqref{eq:VRP} and \eqref{eq:VRW} contain countries with real output (consumption) that is larger (revealed preferred) or smaller (revealed worse) than $\upsilon_{0}$. The sets are related to the `revealed preferred' and `revealed worse' sets introduced in \textcite{varian_nonparametric_1982} but collect discrete observations rather than defining continuous regions in goods space.

\paragraph{Multilateral bounds on the cost of living.} I can now state the paper's central conceptual tool. Proposition \ref{prop:nonparam_col_bounds} gives sharp non-parametric multilateral bounds on the reference consumer's cost of living and positions these bounds relative to their classical bilateral counterparts. The multilateral bounds are the inner inequalities in Proposition \ref{prop:nonparam_col_bounds}.

\begin{proposition}\label{prop:nonparam_col_bounds}
For data $\mathcal{D}$ satisfying Proposition \ref{prop:ref_consumer_characterisation}, for all $i,j\in\{1,\ldots,N\}$ we have
\begin{align} \label{eq:bound_improvements_laspeyres}       
        \min\left\{{\frac{p_{j}^{k}}{p_{i}^{k}}}\right\} &\leq \frac{M^{-}(\mathbf{p}_{j},\upsilon_{i})}{M^{-}(\mathbf{p}_{i},\upsilon_{i})} \leq \frac{e(\mathbf{p}_{j},u_{RC}(\mathbf{q}_{i}))}{e(\mathbf{p}_{i},u_{RC}(\mathbf{q}_{i}))} \leq \frac{M^{+}(\mathbf{p}_{j},\upsilon_{i})}{M^{+}(\mathbf{p}_{i},\upsilon_{i})} \leq \frac{\mathbf{p}_{j}.\mathbf{q}_{i}}{\mathbf{p}_{i}.\mathbf{q}_{i}}
\end{align}
and
\begin{align} \label{eq:bound_improvements_paasche}
    \frac{\mathbf{p}_{j}.\mathbf{q}_{j}}{\mathbf{p}_{i}.\mathbf{q}_{j}}&\leq \frac{M^{-}(\mathbf{p}_{j},\upsilon_{j})}{M^{-}(\mathbf{p}_{i},\upsilon_{j})} \leq \frac{e(\mathbf{p}_{j},u_{RC}(\mathbf{q}_{j}))}{e(\mathbf{p}_{i},u_{RC}(\mathbf{q}_{j}))} \leq \frac{M^{+}(\mathbf{p}_{j},\upsilon_{j})}{M^{+}(\mathbf{p}_{i},\upsilon_{j})} \leq \max\left\{{\frac{p_{j}^{k}}{p_{i}^{k}}}\right\}.
\end{align}
where
\begin{align}
    M^{-}(\mathbf{p},\upsilon) &= \inf_{\mathbf{q}} \mathbf{p}.\mathbf{q} \text{ such that } \mathbf{p}_{i}\mathbf{q}\geq \mathbf{p}_{i}\mathbf{q}_{i} \text{ for all } i \in VRW(\upsilon) \label{eq:M_minus} \\
    M^{+}(\mathbf{p},\upsilon) &= \min_{\mathbf{q}_{i}} \mathbf{p}.\mathbf{q}_{i} \text{ such that } i \in VRP(\upsilon) \label{eq:M_plus}
\end{align}
\end{proposition}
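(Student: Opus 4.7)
My plan is to chain four steps: (i) extract a well-behaved rationalizing utility from Proposition~\ref{prop:ref_consumer_characterisation}; (ii) prove the upper middle inequality using the defining feature of $M^+$; (iii) prove the lower middle inequality using Hicksian demands and feasibility for $M^-$; and (iv) pin down denominators and deduce the outer Laspeyres/min-ratio and Paasche/max-ratio bounds. I sketch each for \eqref{eq:bound_improvements_laspeyres}; chain \eqref{eq:bound_improvements_paasche} follows symmetrically with reference vertex $v_j$ in place of $v_i$.

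For step (i), Proposition~\ref{prop:ref_consumer_characterisation} delivers a continuous, concave, non-satiated, strictly monotonic $u_{RC}$ rationalizing $\mathcal{D}$, so $e(\mathbf{p}_k, u_{RC}(\mathbf{q}_k)) = \mathbf{p}_k \cdot \mathbf{q}_k$ for every $k$. I also need the correspondence $uRv \Rightarrow u_{RC}(\mathbf{q}_u) \ge u_{RC}(\mathbf{q}_v)$: along any RP-walk each edge weight $w(v_{\ell-1}, v_\ell) \le 1$ makes $\mathbf{q}_{v_\ell}$ affordable at $\mathbf{p}_{v_{\ell-1}}$, so rationality of $u_{RC}$ forces $u_{RC}(\mathbf{q}_{v_{\ell-1}}) \ge u_{RC}(\mathbf{q}_{v_\ell})$, and transitivity along the walk closes the implication. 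For step (ii), every $k \in \operatorname{VRP}(v_i)$ satisfies $u_{RC}(\mathbf{q}_k) \ge u_{RC}(\mathbf{q}_i)$ by step (i), so $\mathbf{p}_j \cdot \mathbf{q}_k \ge e(\mathbf{p}_j, u_{RC}(\mathbf{q}_k)) \ge e(\mathbf{p}_j, u_{RC}(\mathbf{q}_i))$; minimising over $k$ yields $M^+(\mathbf{p}_j, v_i) \ge e(\mathbf{p}_j, u_{RC}(\mathbf{q}_i))$. For step (iii), let $h_j$ be a Hicksian bundle attaining $e(\mathbf{p}_j, u_{RC}(\mathbf{q}_i))$. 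For each $k \in \operatorname{VRW}(v_i)$, $u_{RC}(h_j) \ge u_{RC}(\mathbf{q}_i) \ge u_{RC}(\mathbf{q}_k)$, hence $\mathbf{p}_k \cdot h_j \ge e(\mathbf{p}_k, u_{RC}(\mathbf{q}_k)) = \mathbf{p}_k \cdot \mathbf{q}_k$, so $h_j$ is feasible for the program \eqref{eq:M_minus} and $M^-(\mathbf{p}_j, v_i) \le e(\mathbf{p}_j, u_{RC}(\mathbf{q}_i))$.

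For step (iv), since $i \in \operatorname{VRP}(v_i) \cap \operatorname{VRW}(v_i)$ trivially, applying steps (ii) and (iii) with $\mathbf{p}_i$ in place of $\mathbf{p}_j$ pins down $M^+(\mathbf{p}_i, v_i) = M^-(\mathbf{p}_i, v_i) = \mathbf{p}_i \cdot \mathbf{q}_i = e(\mathbf{p}_i, u_{RC}(\mathbf{q}_i))$; dividing the middle inequalities by this common value yields the two inner ratios. The right-hand Laspeyres bound follows from $M^+(\mathbf{p}_j, v_i) \le \mathbf{p}_j \cdot \mathbf{q}_i$, obtained by taking $k = i$ in the minimum defining $M^+$. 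For the left-hand price-ratio bound, let $\mathbf{q}^\ast$ attain $M^-(\mathbf{p}_j, v_i)$; it is feasible for the $M^-(\mathbf{p}_i, v_i)$ program, so $M^-(\mathbf{p}_i, v_i) \le \mathbf{p}_i \cdot \mathbf{q}^\ast$, and $\mathbf{p}_j \cdot \mathbf{q}^\ast \ge (\min_k p_j^k / p_i^k)\,\mathbf{p}_i \cdot \mathbf{q}^\ast$ completes the argument.

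The main obstacle is the utility--preference correspondence in step (i): I must ensure that the walk-induced relation $R$ genuinely agrees with the $u_{RC}$-ordering, and that multiple distinct walks between the same pair of vertices are all consistent with a single such ordering. This is precisely what the CEWEC (via Proposition~\ref{prop:ref_consumer_characterisation}) guarantees, and without that guarantee the transitivity argument could close cycles inconsistently. The remaining expenditure-monotonicity and Hicksian-feasibility arguments are then routine, as is the symmetric treatment for \eqref{eq:bound_improvements_paasche}.
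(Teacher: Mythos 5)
Your proof is correct and follows essentially the same two-step structure as the paper's: first establish $M^{-}(\mathbf{p},\upsilon_{i}) \le e(\mathbf{p},u_{RC}(\mathbf{q}_{i})) \le M^{+}(\mathbf{p},\upsilon_{i})$, then pin both denominators to $\mathbf{p}_{i}\cdot\mathbf{q}_{i}$ and obtain the outer Laspeyres and min-price-relative bounds from feasibility of $\mathbf{q}_{i}$ and of the $M^{-}$ minimiser. The only cosmetic differences are that you verify the expenditure-function bounds directly via expenditure monotonicity in utility and Hicksian feasibility where the paper invokes the set inclusions $QRP(\upsilon)\subseteq E(\upsilon)\subseteq QRW^{c}(\upsilon)$ from \textcite{varian_nonparametric_1982}, and your derivation of the $\min\{p_{j}^{k}/p_{i}^{k}\}$ bound from the $M^{-}$ minimiser's feasibility is a slight (and arguably cleaner) variant of the paper's weighted-sum-of-price-relatives argument.
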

\noindent \textbf{Proof.} See \hyperref[proof:nonparam_col_bounds]{\textbf{Appendix}}.

A few bibliographic remarks about Proposition \ref{prop:nonparam_col_bounds} are in order. The outermost objects in \eqref{eq:bound_improvements_laspeyres} and \eqref{eq:bound_improvements_paasche} are the classical bilateral bounds due to \textcite{konus_problem_1939} and \textcite{Pollak1971a}. The innermost bounds in \eqref{eq:bound_improvements_laspeyres} and \eqref{eq:bound_improvements_paasche} are non-parametric index numbers arising from revealed preference. The building blocks of these numbers, $M^{-}$ and $M^{+}$, are sharp bounds on the expenditure function which are well-known in the revealed preference literature.

Despite its adjacency to known results, to the best of my knowledge Proposition \ref{prop:nonparam_col_bounds} is new. There is some evidence that the multilateral bounds themselves are known: for example, the result is alluded to in \textcite{varian_nonparametric_1982} regarding an empirical application involving an object similar to the cost of living, and a special case of Proposition \ref{prop:nonparam_col_bounds} is articulated for homothetic tastes by \textcite{manser_analysis_1988} and \textcite{dowrick_true_1997}. However, a formal statement about the multilateral cost-of-living bounds for non-homothetic tastes and their position relative to their classical analogues surprisingly does not appear in the literature.

The multilateral bounds are easily computed by solving the small linear programs in \eqref{eq:M_minus} and \eqref{eq:M_plus}. Another property of the bounds is that they tighten as more data are used to construct the reference consumer's tastes. Since this property is the fundamental reason why the bounds can improve upon their bilateral counterparts I record this fact below.
\begin{fact}\label{fact:bounds_tighten_w_n}
    Suppose a graph $G$ associated with the data $\mathcal{D}$ satisfies the CEWEC. Let $G'$ be the graph associated with the data $\mathcal{D}'$ where $\mathcal{D}'$ comprises $\mathcal{D}$ with at least one additional observation. \bigskip

    \noindent Let $M^{-}_{G}(\mathbf{p},\upsilon)$, $M^{-}_{G'}(\mathbf{p},\upsilon)$, $M^{+}_{G}(\mathbf{p},\upsilon)$, and $M^{+}_{G'}(\mathbf{p},\upsilon)$ denote the bounds from Proposition \ref{prop:nonparam_col_bounds} associated with these graphs. If $G'$ satisfies the CEWEC, then
    \begin{align}
        M^{-}_{G'}(\mathbf{p},\upsilon)&\geq M^{-}_{G}(\mathbf{p},\upsilon) \label{eq:more_obs1} \\
        M^{+}_{G'}(\mathbf{p},\upsilon)&\leq M^{+}_{G}(\mathbf{p},\upsilon) \label{eq:more_obs2}
    \end{align}
\end{fact}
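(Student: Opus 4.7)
The plan is to reduce both inequalities to a monotonicity argument about the feasible and candidate sets of the two optimization problems defining $M^{\pm}$. The key intermediate step will be to show that passing from $G$ to $G'$ can only enlarge the revealed-preference sets $\operatorname{VRP}(\upsilon)$ and $\operatorname{VRW}(\upsilon)$ at every pre-existing vertex $\upsilon\in V$.

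First I would verify that $G$ sits inside $G'$ as a weighted subgraph with identical weights on shared edges. This is immediate from \eqref{eq:paasche_edge_weight}: the Laspeyres weight $w_{mn}$ depends only on the data of countries $m$ and $n$, so appending observations adds new vertices and incident edges but leaves every pre-existing vertex, edge, and weight of $G$ intact. Consequently any walk in $G$ remains a walk in $G'$ with the same edge weights, so every RP-walk in the sense of Definition \ref{def:rp_walk} survives in $G'$, which for each $\upsilon$ yields $\operatorname{VRP}_{G}(\upsilon)\subseteq\operatorname{VRP}_{G'}(\upsilon)$ and $\operatorname{VRW}_{G}(\upsilon)\subseteq\operatorname{VRW}_{G'}(\upsilon)$.

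Second, I would feed these containments into \eqref{eq:M_minus} and \eqref{eq:M_plus}. For $M^{-}$, the enlargement of $\operatorname{VRW}(\upsilon)$ adds (weakly) more affordability constraints $\mathbf{p}_i\mathbf{q}\geq\mathbf{p}_i\mathbf{q}_i$ to the linear program, so the feasible set weakly shrinks and the infimum of the linear objective $\mathbf{p}\cdot\mathbf{q}$ weakly rises, giving \eqref{eq:more_obs1}. For $M^{+}$, the enlargement of $\operatorname{VRP}(\upsilon)$ supplies (weakly) more candidate bundles over which $\mathbf{p}\cdot\mathbf{q}_i$ is minimised, so the minimum weakly falls, giving \eqref{eq:more_obs2}. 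The hypothesis that $G'$ also satisfies the CEWEC enters only indirectly: via Proposition \ref{prop:ref_consumer_characterisation} it guarantees that a reference consumer exists on $\mathcal{D}'$, so that $M^{\pm}_{G'}$ are well-defined; the monotonicity argument itself does not use CEWEC.

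I do not anticipate a serious obstacle. The one point needing care is the graph-embedding step, where one must rule out both that extending $\mathcal{D}$ retroactively alters an existing edge weight and that some walk which was not an RP-walk in $G$ becomes one in $G'$ in a way that contracts rather than expands the VRP/VRW sets. Both worries are dispatched by the locality of \eqref{eq:paasche_edge_weight} together with the edgewise definition of the RP condition, which depends only on weights already present on the traversed edges.
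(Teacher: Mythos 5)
Your proposal is correct and takes essentially the same route as the paper: both arguments rest on the observation that $G$ embeds in $G'$ as a weighted subgraph (since each Laspeyres weight depends only on the two countries it connects), so every RP-walk survives and the sets $\operatorname{VRP}(\upsilon)$ and $\operatorname{VRW}(\upsilon)$ can only grow. The paper states this in contrapositive form and leaves the final monotonicity of the two optimization problems implicit, whereas you argue directly and spell that step out; this is a presentational difference, not a different proof.
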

\noindent \textbf{Proof}. See \hyperref[proof:bounds_tighten_w_n]{\textbf{Appendix}}.

\paragraph{Index appraisal.} Appraising  existing cross-country comparison methods via Proposition \ref{prop:nonparam_col_bounds} is one of the paper's main contributions. Since any index number outside the interval (to use \eqref{eq:bound_improvements_laspeyres} as an example) $\left[\frac{M^{-}(\mathbf{p}_{j},\upsilon_{i})}{M^{-}(\mathbf{p}_{i},\upsilon_{i})},\frac{M^{+}(\mathbf{p}_{j},\upsilon_{i})}{M^{+}(\mathbf{p}_{i},\upsilon_{i})}\right]$ is incompatible with a cost of living, to do this, I simply record the frequency and extent to which such cases occur among the $N^{2}$ possible country pairs by counting bound violations.
\begin{definition}[Error rates and magnitudes] \label{def:econ_error_rate}
    Let $\mathcal{M}^{+}_{ij}$ and $\mathcal{M}^{-}_{ij}$  respectively denote $\frac{M^{-}(\mathbf{p}_{j},\upsilon_{i})}{M^{-}(\mathbf{p}_{i},\upsilon_{i})}$ and $\frac{M^{+}(\mathbf{p}_{j},\upsilon_{i})}{M^{+}(\mathbf{p}_{i},\upsilon_{i})}$ or their respective analogues in \eqref{eq:bound_improvements_paasche}. Consider an index $I$ with value $I_{ij}$ for country-pair $(i,j)$. $I$'s \emph{economic error rate} $\varepsilon_{I}$ is defined as the proportion of country pairs where the index $I$ violates one of the multilateral bounds, i.e.
    \begin{align}
    \varepsilon_{I} \equiv \frac{1}{N^2} \sum_{i=1}^{N} \sum_{j=1}^{N} \left( \mathbf{1}_{\{I_{ij} < \mathcal{M}_{ij}^{-}\}} + \mathbf{1}_{\{I_{ij} > \mathcal{M}_{ij}^{+}\}} \right)
    \label{eq:econ_error_rate}
    \end{align}
and we define $I$'s \emph{error magnitude} as the mean percentage value of bound violations:
    \begin{align}
    |\varepsilon|_{I} = \frac{\sum_{i=1}^{N} \sum_{j=1}^{N} \left( \mathbf{1}_{\{I_{ij} > \mathcal{M}_{ij}^{+}\}} \left( \frac{I_{ij}}{\mathcal{M}_{ij}^{+}} - 1 \right) + \mathbf{1}_{\{I_{ij} < \mathcal{M}_{ij}^{-}\}} \left( \frac{\mathcal{M}_{ij}^{-}}{I_{ij}}-1 \right) \right)}{\sum_{i=1}^{N} \sum_{j=1}^{N} \left( \mathbf{1}_{\{I_{ij} > \mathcal{M}_{ij}^{+}\}} + \mathbf{1}_{\{I_{ij} < \mathcal{M}_{ij}^{-}\}} \right)}
    \label{eq:avg_error_mag}
    \end{align}
\end{definition}

The main advantage of Definition \ref{def:econ_error_rate} is that it allows a comparison of multilateral indices' welfare validity that does not require an external index, assumed correct, to serve as the basis of such an appraisal. Definition \ref{def:econ_error_rate} jettisons this need because the data themselves are used to answer this question. The appraisal scheme can also be applied to comparison methods that involve inputs other than country-level price and quantity data (as in case of market exchange rates or theoretical aggregates with calibrated values).

As is the case for any appraisal method, index appraisal via Proposition \ref{prop:nonparam_col_bounds} is by no means free of limitations of its own. First, despite the reference consumer's cost-of-living bounds arising from the data, each of the bounds is nonetheless calculated with reference to the indifference base $u(\mathbf{q}_{i})$. This is an unavoidable consequence of the underlying tastes being non-homothetic which forces dependence of the cost-of-living on a base utility level. The implication of this fact for the index appraisal scheme is that any validity assessment is only as meaningful to the extent that the indifference base $u(\mathbf{q}_{i})$ is itself judged to be an appropriate comparison level for any given context, even when errors are averaged across country pairs as in \eqref{eq:econ_error_rate} and \eqref{eq:avg_error_mag}. An argument in favour of $u(\mathbf{q}_{i})$ is that it relates to the demand that is observed in country $i$. Yet admittedly, many other indifference bases could be chosen. This problem is a classic one in the index number literatures under non-homotheticities (see for example \textcite{pollak_theory_1989}) and is not avoided in this paper's framework (though neither is it avoided in any other non-homothetic comparison concept such as \textcite{neary_rationalizing_2004}). Another limitation in appraising indices via Proposition \ref{prop:nonparam_col_bounds} is that the identity of the reference consumer set is not within the researcher's control. For example, one may wish to construct a reference consumer for all countries in Sub-Saharan Africa but the data may not permit this. While techniques exist that could coerce additional observations into the reference consumer set (e.g. the Critical Cost Efficiency idea of \textcite{afriat1973system}), such extensions would alter the appraisal in ways beyond the scope of this paper to discuss.

\subsection{Index construction and a generalised star system} \label{subsec:fat_star_og}

In this section I define a novel multilateral index which I refer to as a `generalised star system' (GSS) constructed from the reference consumer's cost-of-living bounds. The intuition for this scheme is that the countries in a reference consumer set are assigned the geometric mean of their cost-of-living bounds. These countries jointly act as the centre of a star to value those outside of it, which are assigned the bounds and index values they would have \textit{if} they possessed the tastes of the reference consumer.

To further articulate this idea, first observe that because any point in the interval $\left[\mathcal{M}^{-}_{ij},\mathcal{M}^{+}_{ij}\right]$ is a valid index number, the reference consumer's cost-of-living bounds define a family of non-homothetic multilateral price indices that are welfare-interpretable and transitive for any fixed indifference base $u(\mathbf{q})$. Completing the construction of an index from this interval requires tackling two issues: (1) choosing a point among those permitted and (2) addressing how countries outside the reference consumer set are treated.

For the index value, I select the bounds' geometric mean $\sqrt{\mathcal{M}^{-}_{ij}\mathcal{M}^{+}_{ij}}$. This is a natural candidate relative to the existing literature because for a given reference consumer it can be interpreted as the non-homothetic analogue to the Ideal Afriat Index of \textcite{dowrick_true_1997} and collapses to the Fisher index under homothetic tastes when $N=2$. A proof of this fact is in the Supplemental Appendix. Addressing the second issue about extending the index to countries outside the reference consumer set is less straightforward. Potential trouble stems from the fact that adding a new country to a given reference consumer set will either produce a new, different reference consumer or violate CEWEC. To solve this problem, I construct a star system that is `generalised' in the sense that the \textit{multiple} countries from which $u(\mathbf{q})$ is constructed jointly act as the star's hub by   defining the reference consumer's tastes which are used to value other observations. Namely, any country $k$ lying outside of the reference consumer set has its cost-of-living index bounds derived from the feasible demands under its observed budget $(p_{k},m_{k})$ that are permitted under the reference consumer's tastes. These demands are easily found using the graph $G$ as follows. For the lower bound, we add a new vertex $\upsilon_{k}$ to $G$ along with directed edges pointing to $\upsilon_{k}$ and then calculate the expenditure function bound $M^{-}(\mathbf{p},\upsilon_{k})$ from \eqref{eq:M_minus}. An analogous procedure is not feasible for the upper bound in \eqref{eq:M_plus} because the observation $(p_{k},m_{k})$ may be disallowed by Proposition \ref{prop:ref_consumer_characterisation}. This problem does not occur for $M^{-}(\mathbf{p},\upsilon_{k})$ because the edges pointing to $\upsilon_{k}$ are always constructed with known quantities: $\mathbf{q}_{k}$ and prices from reference consumer set countries. I hence calculate the upper bound as
\begin{align}
    M^{+}(\mathbf{p},\upsilon_{k}) =& \sup_{\mathbf{q}} \mathbf{p}.\mathbf{q} \text{ such that } \label{eq:M_plus_outside_star} \\ 
    & \mathbf{p}_{i}\mathbf{q}\geq \mathbf{p}_{i}\mathbf{q}_{i} \text{ for all } i \in VRW(\upsilon_{k}) \text{ and } \nonumber \\
    & \mathbf{p}_{k}\mathbf{q}=m_{k} \nonumber
\end{align}
which clearly bounds $e(\mathbf{p},u_{RC}(\mathbf{q}_{k}))$ as the chosen $\mathbf{q}$ must lie on the budget $(\mathbf{p},m_{k})$ by the final constraint. \eqref{eq:M_plus_outside_star} is the largest expenditure among bundles on $(\mathbf{p},m_{k})$ permitted by the revealed preference restrictions arising from the reference consumer's tastes. With both bounds calculated, a geometric mean is then taken to obtain an index value. 

A final index requirement applies when there are multiple non-reference consumer countries. When there are two or more non-reference consumer countries which require index values, the resulting index values may be mutually inconsistent in the sense of \textcite{adams2020mutually} because solving the linear program in \eqref{eq:M_plus_outside_star} requires forecasting demand that when done more than once, may generate bundles that jointly violate Proposition \ref{prop:ref_consumer_characterisation}. This problem can be readily solved by introducing additional constraints and solving the associated mixed integer linear program developed in \textcite{adams2020mutually}. Importantly, this issue never arises for countries exclusively within the reference consumer set as a result of Proposition \ref{prop:nonparam_col_bounds}. Within this set, the bounds are mutually consistent by construction\footnote{To see this, observe that any case of mutual inconsistency must either contradict the fact that the data violate Proposition \ref{prop:ref_consumer_characterisation} for the reference consumer set, or the fact that the bounds in Proposition \ref{prop:nonparam_col_bounds} are sharp. A simple numerical illustration of this point for three countries is also provided in the case of homothetic tastes by \textcite[p.~45]{dowrick_true_1997}.} (which is where the source of their improvements over the classical bounds arises from) so index values that lie between them can always be calculated jointly while never violating rationality.

Figure \ref{fig:minimal_example_2} illustrates the generalised star system by continuing the numerical example from Section \ref{subsec:motivating_example}. Under homotheticity, this scheme extends the validity of the Afriat Ideal Index to countries whose data would otherwise preclude a rationalising preference.  

\begin{figure}[h!]
    \centering
    \captionsetup[subfigure]{justification=centering,font=small}

    \begin{subfigure}[t]{0.32\textwidth}
        \centering
        \resizebox{\linewidth}{!}{%
        \begin{tikzpicture}[
                vertex/.style={circle, draw, fill=black, inner sep=0pt, minimum size=6pt},
                red vertex/.style={vertex, draw=red, fill=red},
                cyan vertex/.style={vertex, draw=cyan, fill=cyan},
                edge/.style={-{Stealth[length=2.4mm]}, bend left=20},
                red edge/.style={edge, draw=red},
                weight/.style={font=\scriptsize, midway, fill=white, inner sep=1pt}
            ]
                \node[vertex] (a) at (0,0)        [label=left:$A$] {};
                \node[vertex] (b) at (3,0)        [label=right:$B$] {};
                \node[vertex] (c) at (1.5,2.6)    [label=above:$C$] {};
                \node[red vertex] (d) at (-1.5,-2.6) [label=below:$D$] {};

                \draw[edge] (a) to node[weight] {$1.330$} (b);
                \draw[edge] (b) to node[weight] {$0.823$}  (a);
                \draw[edge] (a) to node[weight] {$0.915$}(c);
                \draw[edge] (c) to node[weight] {$1.400$}  (a);
                \draw[edge] (b) to node[weight] {$0.588$}(c);
                \draw[edge] (c) to node[weight] {$1.700$}  (b);
                \draw[red edge] (a) to node[weight] {$0.723$}(d);
                \draw[red edge] (d) to node[weight] {$0.963$}  (a);
        \end{tikzpicture}}%
        \caption{Violation of CEWEC}
    \end{subfigure}
    \hfill
    \begin{subfigure}[t]{0.32\textwidth}
        \centering
        \resizebox{\linewidth}{!}{%
        \begin{tikzpicture}[
                vertex/.style={circle, draw, fill=black, inner sep=0pt, minimum size=6pt},
                green vertex/.style={vertex, draw=ForestGreen, fill=ForestGreen},
                blue vertex/.style={vertex, draw=blue, fill=blue},
                edge/.style={-{Stealth[length=2.4mm]}, bend left=20},
                red edge/.style={edge, draw=red},
                weight/.style={font=\scriptsize, midway, fill=white, inner sep=1pt}
            ]
                \node[green vertex] (a) at (0,0)        [label=left:$A$] {};
                \node[green vertex] (b) at (3,0)        [label=right:$B$] {};
                \node[green vertex] (c) at (1.5,2.6)    [label=above:$C$] {};
                \node[blue vertex]  (d) at (-1.5,-2.6)  [label=below:$D$] {};

                \draw[gray, dashed] (1.5,0.8667) circle [radius=1.732];

                \draw[edge] (a) to node[weight] {$1.330$} (b);
                \draw[edge] (b) to node[weight] {$0.823$}  (a);
                \draw[edge] (a) to node[weight] {$0.915$}(c);
                \draw[edge] (c) to node[weight] {$1.400$}  (a);
                \draw[edge] (b) to node[weight] {$0.588$}(c);
                \draw[edge] (c) to node[weight] {$1.700$}  (b);
                \draw[edge] (a) to node[weight] {$0.723$}(d);
        \end{tikzpicture}}%
        \caption{Generalised star}
    \end{subfigure}
    \hfill
    \begin{subfigure}[t]{0.32\textwidth}
        \centering
        \resizebox{\linewidth}{!}{%
        \begin{tikzpicture}[
                vertex/.style={circle, draw, fill=black, inner sep=0pt, minimum size=6pt},
                green vertex/.style={vertex, draw=ForestGreen, fill=ForestGreen},
                blue vertex/.style={vertex, draw=blue, fill=blue},
                edge/.style={-{Stealth[length=2.4mm]}, bend left=20},
                red edge/.style={edge, draw=red},
                weight/.style={font=\scriptsize, midway, fill=white, inner sep=1pt}
            ]
                \node[green vertex] (a) at (0,0)        [label=left:$A$] {};
                \node[blue vertex]  (b) at (3,0)        [label=right:$B$] {};
                \node[blue vertex]  (c) at (1.5,2.6)    [label=above:$C$] {};
                \node[blue vertex]  (d) at (-1.5,-2.6)  [label=below:$D$] {};

                \draw[edge] (a) to node[weight] {$1.330$} (b);
                \draw[edge] (a) to node[weight] {$0.915$}(c);
                \draw[edge] (a) to node[weight] {$0.723$}(d);
        \end{tikzpicture}}%
        \caption{Laspeyres star}
    \end{subfigure}

    \caption{\textbf{A generalised star system.} Consider using the reference consumer underlying the example data in Figure \ref{fig:minimal_example_1} to find $P_{A}/P_{D}$ for a new fourth country with data $ \mathbf{p}_{D}\!\equiv\!(10,4),\; \mathbf{q}_{D}\!\equiv\!(10,2)$. A reference consumer does not exist for \{A,B,C,D\} so Proposition \ref{prop:nonparam_col_bounds} does not apply (panel (a)). Panel (b) solves this issue with a generalised star system to calculate the index value $\sqrt{\mathcal{M}^{-}_{AD}\mathcal{M}^{+}_{AD}}$ without altering the original reference consumer set. Further details of the calculation for $\mathcal{M}^{+}_{AD}$ given by \eqref{eq:M_plus_outside_star} are as follows: the two countries in the set $VRW(\upsilon_{D})$ are $A$ and $C$; the demanded bundle for $D$ found by the linear program under the reference consumer's tastes is $\mathbf{q}*=(0,27)$; and the resulting index value is $\sqrt{0.87(2.25)}=1.399$. I call the scheme a `generalised star system' because multiple countries (panel (b), $\{A,B,C\}$) serve as the star's hub as opposed to just one (panel (c), where the centre of the star is country $\{A\}$). 
    \label{fig:minimal_example_2}}
\end{figure}

\section{Main applications} \label{sec:application}

In this section I apply the ideas in Section \ref{sec:intlrefcons} to identify an international reference consumer, appraise the welfare validity of several major multilateral indices, and re-visit two uses of these indices: the size of the world and inequality within it. Section \ref{subsec:data} describes the data. Section \ref{subsec:constructing_ref_consumer} presents the reference consumer and their tightened bounds. Section \ref{subsec:results} conducts an index appraisal using this reference consumer set. Finally, in Section \ref{subsec:output_and_inequality}, I turn to  major uses of international comparisons and re-calculate global output and inequality using the generalised star system from Section \ref{subsec:fat_star_og}. A Supplemental Appendix provides further notes on the data, repeats these exercises under homotheticity, and discusses how to correct welfare-inconsistent indices.

\subsection{Data} \label{subsec:data}

My main source of data is the United Nations International Comparison Program (ICP). The ICP collects expenditures in national currency units from participating economies and computes purchasing power parities (PPPs) in order to convert these expenditures into a common currency. The data comprise relative prices in the form of bilateral PPPs $\pi^{k}_{ib}\equiv p_{i}^{k}/p_{b}^{k}$ (with the base country $b$ being the United States) and per-capita nominal expenditures $e^{k}_{i}$ as defined in Section \ref{sec:notation} for the ICP's 2017 reference year. These prices and expenditures are reported at the ‘basic heading’ level, which are the finest GDP subclasses for which expenditure data are provided to the ICP by national statistical authorities. Despite this granularity, since basic headings are composites of individual goods their standalone units do not have a meaningful physical interpretation. My application dataset comprises $N=173$ countries and $K=114$ basic headings for which prices and demands are available for each country, and which comprise a standard consumption aggregate known as Household and NPISHs' Final Consumption Expenditure. Alongside the ICP data I also use market exchange rates from the International Monetary Fund's Exchange Rates dataset. These rates are annual averages over 2017 expressed in units of each country's domestic currency per US dollar. Further notes about the data are in the Supplemental Appendix.

\subsection{The reference consumer and their cost-of-living bounds} \label{subsec:constructing_ref_consumer}

\paragraph{A near-global reference consumer for the 2017 ICP.} My first finding is that there exists a reference consumer whose tastes account for all but one of the world's countries and over 99\% of output in the 2017 sample (valued using PPPs relative to the United States). These particular tastes are distinguished in the sense that they span the maximum number of countries jointly  consistent with Proposition \ref{prop:ref_consumer_characterisation}. The countries are shown in Figure \ref{fig:max_clique_irc}.

\begin{figure}[h!]
    \centering
    \includegraphics[width=0.8\textwidth]{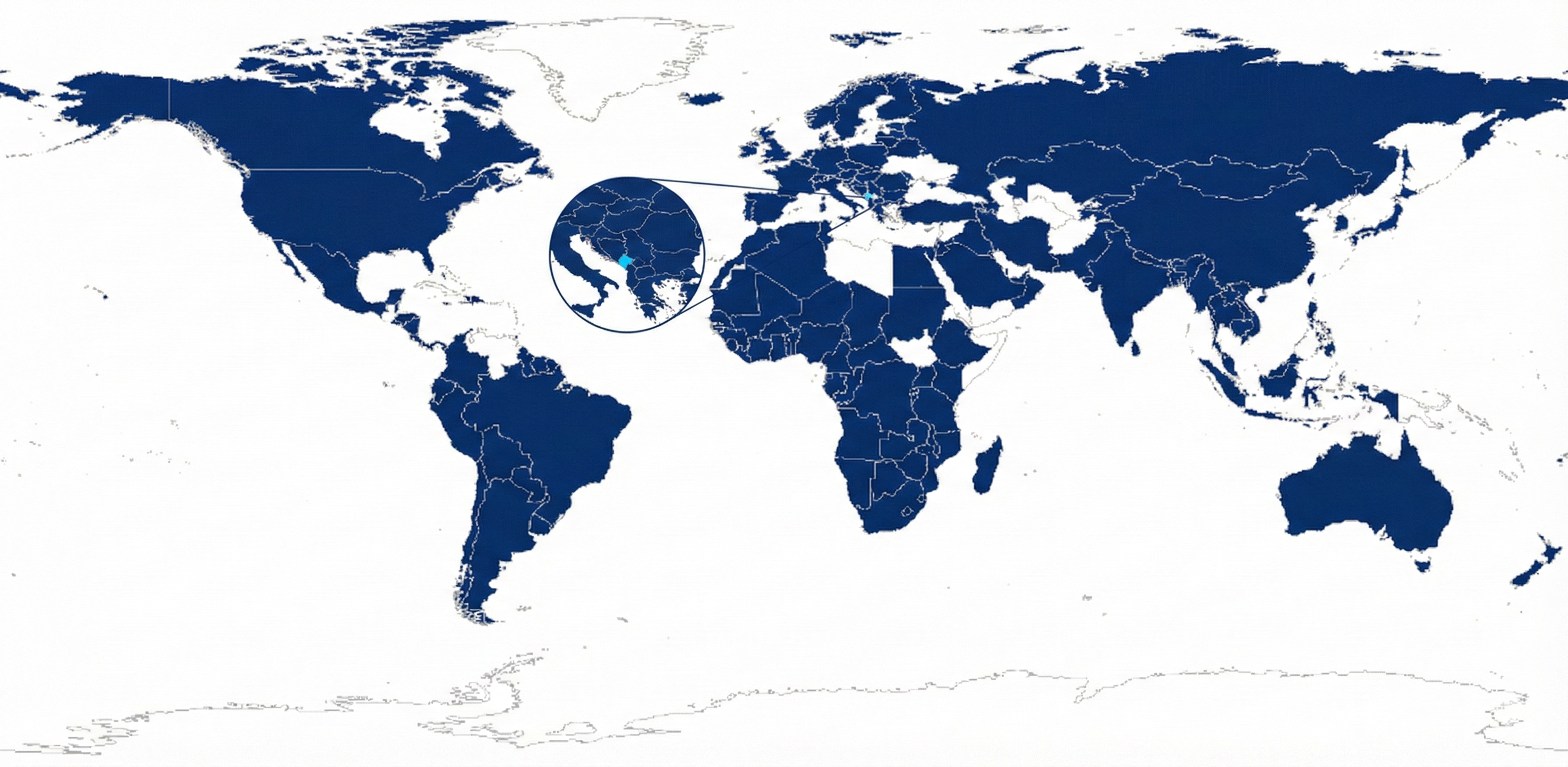}
    \caption{Reference consumer countries (dark blue). The only incompatible country (Montenegro) is in light blue. White countries are not represented in the data. \textbf{Note}: A natural question is how `far' MNE's data are from being consistent with the reference consumer's tastes. The difference appears small. For example, the Money Pump Index (MPI) of \textcite{echenique2011money} that measures the degree of revealed preference violations, and which in this context can be interpreted as a potential arbitrageur's profit from a failure of the Law of One Price, attains a value of just 1.34\% of country-pair expenditure.}
    \label{fig:max_clique_irc}
\end{figure}

The large number of countries included in the reference consumer set provides a broad base to appraise indices with the bounds in Proposition \ref{prop:nonparam_col_bounds}. This base also lends credence to multilateral indices that assume common tastes across countries. However, it is worth reminding the reader that Figure \ref{fig:max_clique_irc} should not be interpreted as evidence that these countries have identical preferences, but rather that the world's observed demands may be treated \textit{as if} they were generated from a single set of tastes.

\paragraph{Improvements to the classical bilateral cost-of-living bounds.} Consistent with Proposition \ref{prop:nonparam_col_bounds}, the multilateral cost-of-living bounds are tighter than their bilateral classical counterparts. To illustrate these improvements, an example is shown below at Table \ref{tab:bounds_comparison_usa_chn} for the PPP exchange rates for country pair example of the United States and China, which is the world's largest bilateral trading relationship. In this case the width of the non-parametric multilateral bounds is 17 percentage points tighter than the classical bound, with the improvement for this country pair being driven entirely by the lower bound. 
\begin{table}[ht]
    \centering
    \begin{tabular}{lccc ccc|c}
        \toprule
        \toprule
        & \multicolumn{3}{c}{Classical bilateral bounds} 
        & \multicolumn{3}{c}{Multilateral bounds} 
        & \\
        \cmidrule(lr){2-4} \cmidrule(lr){5-7}
         & Min.\ price relative & Laspeyres & $\Delta_{C}$ 
         & $\mathcal{M}^{-}$ & $\mathcal{M}^{+}$ & $\Delta_{M}$ 
         & (1-$\Delta_{M}/\Delta_{C}$) \\
        \midrule
        USA--China 
         & 0.629 & 4.39 & 3.76 
         & 1.28 & 4.39 & 3.11  
         & 0.173 \\
        \bottomrule
        \bottomrule
    \end{tabular}
    \caption{Values of the classical bilateral and multilateral cost-of-living bounds based on reference consumer tastes for the United States and China. Each reported bound is calculated for the indifference base $u(\mathbf{q}_{US})$ along with the width of the classical and multilateral bounds ($\Delta_{C}$ and $\Delta_{M}$) and the proportional difference in bound widths ($1-\Delta_{M}/\Delta_{C}$). The 2017 USD/CNY market exchange rate for this currency pair was 6.759, which lies well outside the  upper bounds and which implies an overvaluation of the yuan relative to the US dollar from the perspective of the reference consumer's cost of living.}
    \label{tab:bounds_comparison_usa_chn}
\end{table}

A similar pattern holds across all of the country pairs in the data and is shown in Figure \ref{fig:bound_improv_violin_GARP}. The average multilateral bound improvement relative to the classical bilateral bounds is 22.1\%. However, this average masks heterogeneity across the bounds and country pairs. As shown in Figure \ref{fig:bound_improv_violin_GARP}, the majority of improvements occur for the lower bound rather than upper bound. This is not in itself surprising as the classical lower bound is a highly conservative quantity that requires extreme behaviour on the part of the consumer to achieve. (Nonetheless, the remark of \textcite[p.~18]{pollak_theory_1989} that `it is not our lack of ingenuity but the inherent logic of the situation which prevents us from finding better [bounds]' applies here.) A less obvious fact is the distribution of improvements \textit{within} each bound type, which also differs markedly between the upper and lower bounds. Here, many countries attain only limited improvements upon their minimum price relatives which runs against the earlier intuition about the bound being a conservative one.

\begin{figure}[h!]
    \centering
    \includegraphics[width=0.8\textwidth]{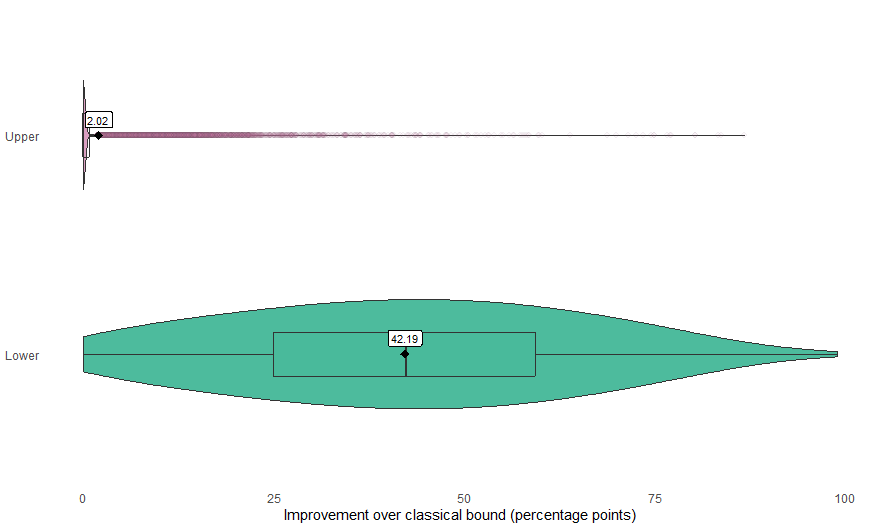}
    \caption{Improvements between the classical bilateral and non-parametric multilateral bounds based \eqref{eq:bound_improvements_laspeyres} for the $172^2 = 29,584$ country pairs in the reference consumer set. \textbf{Note}: bound improvements are defined as the percentage point difference between the widths of the upper and lower bounds between the bilateral and multilateral cases. The labeled points show the mean percentage point improvement for the lower and upper bounds respectively.}. 
    \label{fig:bound_improv_violin_GARP}
\end{figure}

Improvements to the classical cost-of-living bounds are also not uniformly distributed across space. As shown in Figure \ref{fig:tightened_bounds_map} for $\mathcal{M}^{-}_{ij}$, countries with the greatest improvements to their bounds are concentrated within the Middle East and South America. In contrast, many countries in Sub-Saharan Africa experience modest increases. These differences are a consequence of the structure of $G$ and its relationship to relative real incomes. To take the example of the lower bound patterns seen in Sub-Saharan African: country budgets in this region are unlikely to have edge weights less than one given their low per-capita incomes reduces the number of countries in the sets $VRW(\upsilon)$ and $VRP(\upsilon)$. However, this relationship is not a mechanical one as the greatest bound improvements do not always occur for the highest-income countries in the sample.
\begin{figure}[h!]
    \centering
    \includegraphics[width=0.8\textwidth]{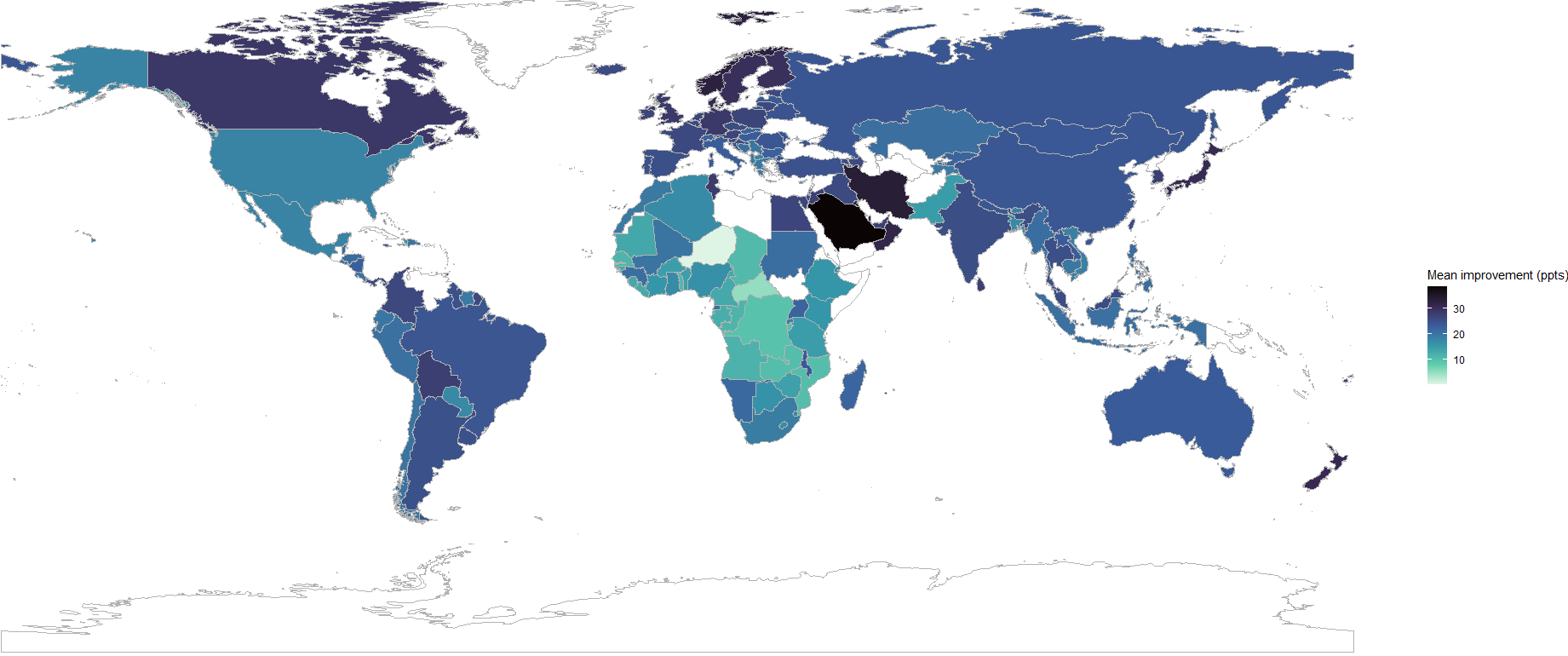}
    \caption{Mean percentage improvements from the non-parametric lower bound $\mathcal{M}^{-}_{ij}$ on the bilateral cost-of-living index relative to the classical lower bound $\min\left\{\frac{p_{j}^{k}}{p_{i}^{k}}\right\}$ under the reference consumer's tastes. The bound improvement for country $i$ is the sum of bilateral proportional difference averaged across all country-pairs using country $i$'s indifference level as a base, i.e. $\frac{1}{N} \sum_{j=1}^{N} (1-\mathcal{M}^{-}_{ij}/\min\{\{p_{j}^{k}/p_{i}^{k}\})$.}
    \label{fig:tightened_bounds_map}
\end{figure}

\subsection{Appraising multilateral indices' welfare validity} \label{subsec:results}

\paragraph{Index selection and research design rationale.} I now appraise several major multilateral indices, along with market exchange rates, in a scheme analogous to treatment arms in an experiment. The choice of indices appraised are: (1) the Geary-Khamis index (GK; \cite{geary_note_1958} and \cite{khamis_new_1972}); (2) the bilateral Fisher index; (3) the multilateral GEKS index (\cite{gini_quelques_1924}, \cite{elteto_problem_1964}, and \cite{szulc_indices_1964}); (4) the bilateral Törnqvist index; (5) and the multilateral CCD index of \textcite{caves_economic_1982}. A brief description of each of these indices is at in the Supplemental Appendix.

The choice of these indices lies beyond their broad application and influence in the index number and international comparisons literatures. Collectively, the five indices possess theoretical properties that differ from each other such that the bounds in Proposition \ref{prop:nonparam_col_bounds} and which allow me to quantify the welfare consistency of three index features that are often-discussed in the international comparisons literature. These are namely:
\begin{enumerate}
    \item the tension between \textit{additive} multilateral indices (that is, ones where aggregate real expenditure equals the sum of its components) which as shown by \textcite{diewert_axiomatic_1999}, cannot account for consumer substitution in actual quantities and \textit{superlative} ones, which account for substitution effects but which are inconsistent in aggregation;
    \item the implication assuming \textit{homothetic} preferences versus allowing for \textit{non-homothetic} tastes (e.g. those with income effects); and 
    \item the welfare effect of imposing the \textit{circularity} property (that is, that comparisons between any two countries are path-independent with $P_{ik}=P_{ij}P_{jk}$) upon bilateral indices to produce consistent multilateral comparisons.
\end{enumerate}
The logic of how the selected indexes achieve these comparisons is analogous to the selection of treatment conditions in an experiment. Table \ref{tab:compared_indices_properties} and Table \ref{tab:compared_indices_identification} summarise the relevant index properties and pairwise comparisons designed to shed light on these  areas. Comparisons of the relative consistency of additivity, homotheticity, and circularity, can be made by calculating either or both $\Delta \varepsilon$ or $\Delta |\varepsilon|$ for a suitable pair of indices. For example, comparing the economic error rates of either the Fisher vs. GEKS or Törnqvist vs. CCD indices isolates something akin to the `welfare cost of circularity' without confounding factors like additivity or homothetic/non-homothetic tastes assumptions being present. While comparisons have long been of interest and calculated in the literature between specific indices, it is difficult to draw conclusions about the extent of these comparisons' consistency with an economic welfare interpretation because in most cases at least one must be assumed to be correct. The bounds solve this problem since they are defined independently of any index formula and instead arise from the empirical content of the theory. 

Despite the advantages of this research design, it is important to note that the five appraised indices do not provide a general conclusion about the comparative welfare validity of the examined index properties. For example, it may possible that a circular, additive index that is not Geary-Khamis performs very differently to it. This limitation can again be interpreted analogously to that of an experiment.  Standalone experiments rarely possess sufficient external validity to settle a research question without complementary evidence, but this does not void their value; so too is this the case for the index appraisal exercise.

\begin{table}[h!]
\centering
\begin{tabular}{lccc}
\toprule
\toprule
\textbf{Method} & \textbf{Satisfies circularity? }&\textbf{ Additive?} & \textbf{Superlative?} \\
\midrule
Market exchange rates & \ding{51} & \ding{51} & \ding{55} \\
\midrule
Geary-Khamis & \ding{51} & \ding{51} & \ding{55} \\
\midrule
Fisher & \ding{55} & \ding{55} & \ding{51} (homothetic tastes) \\
GEKS & \ding{51} & \ding{55} & \ding{51} (homothetic tastes) \\ 
\midrule
Törnqvist & \ding{55} & \ding{55} & \ding{51} (non-homothetic tastes) \\
CCD & \ding{51} & \ding{55} & \ding{51} (non-homothetic tastes) \\ 
\bottomrule
\bottomrule
\end{tabular}
\caption[Caption for LOF]{Selected theoretical properties of the five compared indices and market exchange rates. \textbf{Note:} although it is not superlative, the Geary-Khamis index is exact for linear (additive) and Leontief preferences. (I thank Erwin Diewert for alerting me to this fact.)}
\label{tab:compared_indices_properties}
\end{table}
\begin{table}[h!]
\centering
\begin{tabular}{ll}
\toprule
\toprule
\textbf{Comparison} & \textbf{Identifying error rate differences} \\
\midrule
Additive vs. non-additive & (1a): $\varepsilon_{GK}-\varepsilon_{GEKS}$ and/or \\
& (1b): $\varepsilon_{GK}-\varepsilon_{CCD}$ \\
Homothetic vs. non-homothetic & (2): $\varepsilon_{GEKS}$ \\
Circular vs. non-circular & (3a): $\varepsilon_{GEKS} - \varepsilon_{Fisher}$ and/or \\
& (3b): $\varepsilon_{CCD} - \varepsilon_{T\ddot{o}rnqvist}$ \\
\bottomrule
\bottomrule
\end{tabular}
\caption{The pairwise comparisons and economic error rate differences that via the bounds $\mathcal{M}^{-}_{ij}$ and $\mathcal{M}^{+}_{ij}$ isolate the effect of imposing the conditions of additivity, homotheticity, or circularity upon a specific index number method on that index's consistency with a cost-of-living index based on the reference consumer's tastes.}
\label{tab:compared_indices_identification}
\end{table}

\paragraph{Index appraisal results.} Figure \ref{fig:index_error_chart_GARP} displays $\varepsilon$ and average error magnitudes $|\varepsilon|$ for the considered indices for the 2017 ICP reference year's data alongside these quantities for market exchange rates. The results for the pairwise comparisons in Table \ref{tab:compared_indices_identification} are in Table \ref{tab:compared_indices_results}. 

\begin{figure}[h!]
    \centering
    \includegraphics[width=0.8\textwidth]{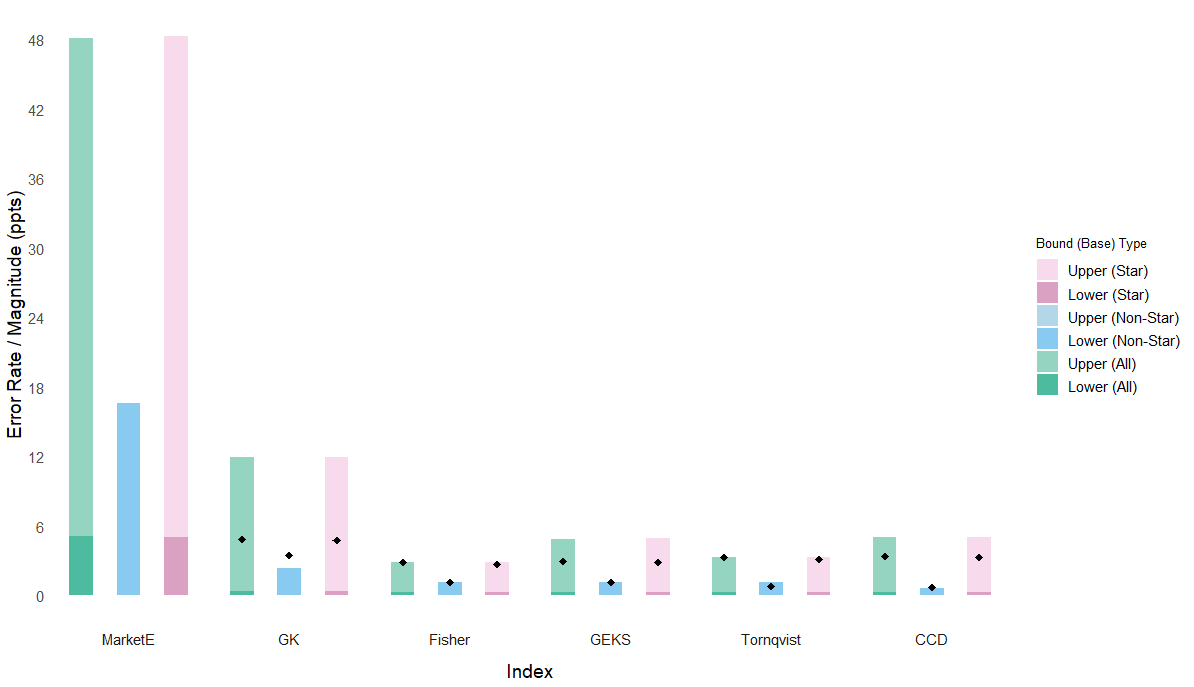}
    \caption{The average error rate $\varepsilon$ (coloured bars) and mean error magnitude $|\varepsilon|$ (black diamonds) for the five considered indices and market exchange rates based on the reference consumer tastes and \eqref{eq:bound_improvements_laspeyres}. \textbf{Note}: error rates are reported for both upper (lighter-coloured bars) and lower (darker-coloured bars) non-parametric bounds. Error rates and magnitudes are reported for three country-pair types: (1) all country pairs (green bars); (2) pairs where the base country $b$ (in the sense of $u(\mathbf{q}_{b}$) is a member of the reference consumer set (pink bars); and (3) pairs where this is not the case (blue bars). Error magnitudes for market exchange rates lie on average 7-10x outside the bounds so are omitted from the plot.} 
    \label{fig:index_error_chart_GARP}
\end{figure}
\begin{table}[h!]
\centering
\begin{tabular}{lcc|lcc}
\toprule
\toprule
\textbf{Index} & $\varepsilon$ & $|\varepsilon|$ & \textbf{Index pairs} & $\Delta\varepsilon$ & $\Delta|\varepsilon|$ \\
\midrule
$Fisher$ & 2.82 & 2.87 & (3a): $GEKS-Fisher$ & 2.04 & 0.05  \\
$T\ddot{o}rnqvist$ & 3.26 & 3.29 & (3b): $CCD-T\ddot{o}rnqvist$ & 1.74 & 0.07  \\
$GEKS$ & 4.86 & 2.92 & (1b): $GK-CCD$ & 6.90 & 1.49  \\
$CCD$ & 5.00 & 3.36 & (1a): $GK-GEKS$ & 7.04 & 1.93  \\
$GK$ & 11.90 & 4.85 &  \\
\bottomrule
\bottomrule
\end{tabular}
\caption{$\varepsilon$, $|\varepsilon|$, and relevant pairwise comparisons for the five considered indices.}
\label{tab:compared_indices_results}
\end{table}

Several patterns emerge among the error rates which apply to all of the appraised indices and rates. First, Figure \ref{fig:index_error_chart_GARP} strengthens the evidence against using market exchange rates for welfare applications: the rates' large errors compared to all the appraised indices suggest they are not a valid measure of economic welfare. Second, the majority of bound violations occur with respect to the upper bound rather than the lower one. An implication of this is that welfare inconsistencies generally lead to cross-country PPPs being over-estimated (and thus real incomes being under-estimated). However, the mean magnitude of this bias -- leading to the second pattern in Figure \ref{fig:index_error_chart_GARP} -- is within 3-5 percentage points of the relevant bound for the five indices. This is small when compared to the analogous error for market exchange rates, and is also small in practical terms when compared against known sources of bias in international comparisons for example, \textcite{deaton2017trying} document and investigate inter-wave differences in PPP estimates for ICP surveys between 2005 and 2011 that saw consumption PPPs for some regions overestimated by 18-26\%. When taken together with the fact that between c. 88-97\% of PPPs already lie within the allowed theoretical bounds for the five indices, Figure \ref{fig:index_error_chart_GARP} can be interpreted as evidence supporting the welfare validity of major multilateral index number methods used for international comparisons of prices and income, which despite their diverse theoretical foundations and origins, feature in significant and distinct applications.

The appraisal also reveals notable economic differences between each of the five appraised indices. First, relative to the other compared indices (each of which is superlative), errors are highest for the GK index relative to the superlative indices regardless of whether latter are bilateral or multilateral. This suggests the existence of substitution effects in the data and the fact that, as shown by \textcite{diewert_axiomatic_1999}, the GK index cannot incorporate these effects in the comparisons it generates. A second fact relates to $\varepsilon_{Fisher}$ and $\varepsilon_{GEKS}$: imposing the assumption of homothetic tastes in these indices does not appear to violate a cost-of-living interpretation for the majority of the country-pair comparisons in the data; in these cases 3-5\% of pairs lie outside one of the non-homothetic bounds. This is a surprising finding given the strength of the homotheticity assumption, but is consistent with analagous tests in the empirical revealed preference for within-country time series data (e.g. \cite{manser_analysis_1988}). A third distinction between the compared indices revealed by the bounds relates to circularity: imposing the `GEKS' or `CCD' step to the Fisher or Törnqvist to render the bilateral indices circular results in a 1.7-2 percentage point increase in the proportion of country pair comparisons that lie outside of the permitted welfare bounds. This finding suggests that imposing transitivity does not meaningfully affect welfare-validity. This finding is consistent with the existing numerical evidence about this topic in the literature. For example, \textcite{alterman_international_1999} find limited quantitative differences between the Fisher and GEKS indices. Finally, beyond individual index properties, the comparisons in Table \ref{tab:compared_indices_results} also permit an investigation of which of the three properties of additivity, circularity, and homotheticity are the most egregeious in terms of their impact on welfare-validity. In this regard, the ranking of these properties from `worst' to `best' is additivity, homotheticity, and circularity. 

\subsection{World output and inequality when accounting for income effects} \label{subsec:output_and_inequality}

\noindent As described in the Introduction, cross-country price and income comparisons underpin applications across many fields of economics. Given this, I now move beyond appraising the empirical welfare content of existing indices to instead ask: what do the ideas and results in this paper imply for the ends these comparisons are typically made for?

To this end, I use the generalised star system to re-calculate two fundamental aggregate statistics in cross-country comparisons: global output and inequality. I compare these to the current best practice ICP measurement which uses GEK parities. I find that when calculated using the generalised star system (which takes into account income effects due to the non-homotheticity of the underlying preference) with the United States' consumption bundle as an indifference base, the world in 2017 was substantially larger and more equal than the current standard measurements conducted by the ICP.

\paragraph{Parity differences with income effects.} Figures \ref{fig:geks_vs_gss_scatter} and \ref{fig:geks_vs_gss_bounds_position} show the root cause of my results in this sub-section: the differences between the PPPs generated from the generalised star system and the GEKS methods. The GSS parities are on average two-thirds of the size of their corresponding GEKS value and are smaller than the GEKS parities for almost all the countries in the sample. These differences correlate closely with per-capita expenditure: rich countries typically show the smaller changes between index values compared to poor ones suggesting that ICP/GEKS may function as if countries were valued using a rich country's consumption bundle irrespective of income. Moreover, as shown in Figure \ref{fig:geks_vs_gss_bounds_position}, for nearly all countries the GEKS values hew very closely to the upper non-parametric bound. As a result, when interpreted as the cost of living, the GSS values imply that welfare relative to the United States is generally higher than indicated by the conventional ICP measures, which has a cost of living interpretation under the hypothesis of homotheticity. Intriguingly, this empirical finding contrasts with that of the trade and product-variety based measure of \textcite{cavallo2023product}, who find that welfare relative to the United States is lower than real consumption for many countries when deriving the cost of living from the \textcite{melitz2003impact} model (which like GEKS also assumes homotheticity).

\begin{figure}[h!]
    \centering
    \includegraphics[width=0.75\textwidth]{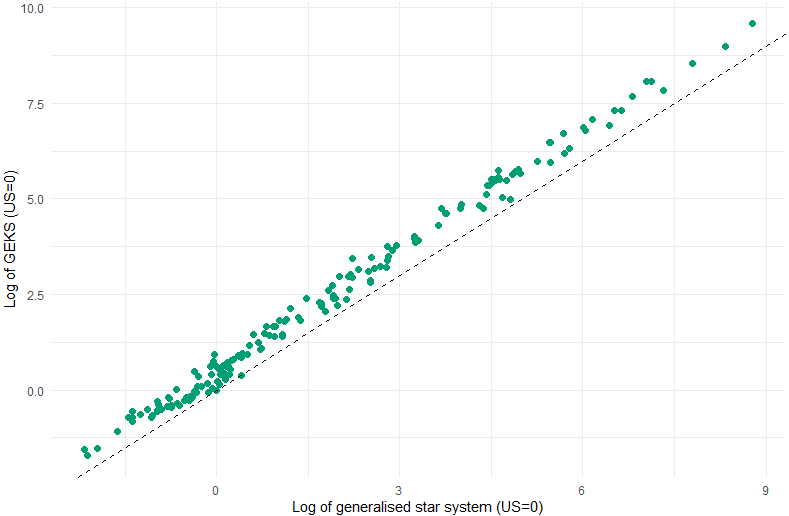}
    \caption{GSS and GEKS parities relative to the United States (log scale). Summary statistics for the percentage point differences between the parities are as follows. Mean -33.65; SD 13.93; Min -84.25 (Nepal, Niger); Max 121.76 (Montenegro, Argentina).} 
    \label{fig:geks_vs_gss_scatter}
\end{figure}
\begin{figure}[h!]
    \centering
    \includegraphics[width=0.75\textwidth]{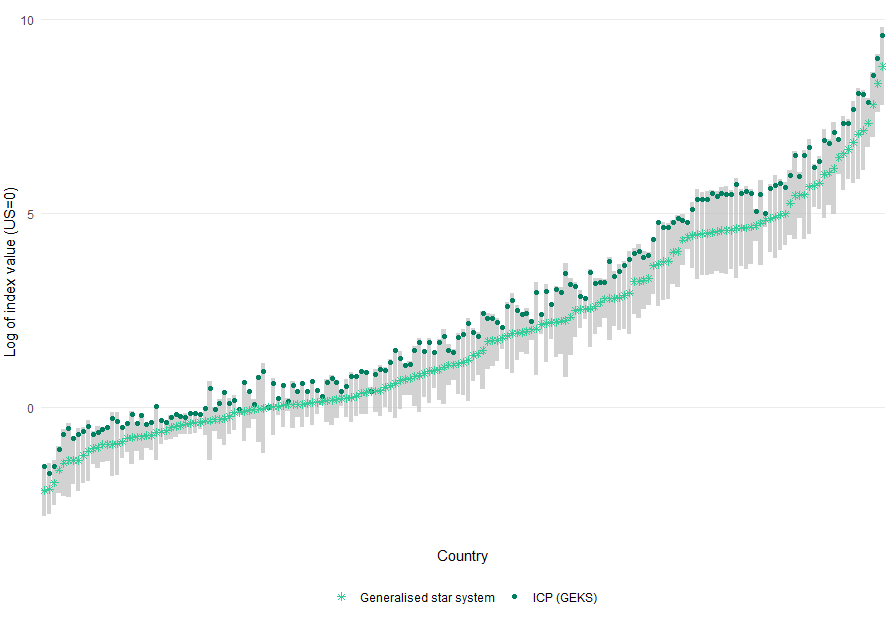}
    \caption{GSS and GEKS parities relative to the United States with the concomitant multilateral cost-of-living bounds (log scale). For nearly all countries in the data the GEKS parities lie very close to the upper non-parametric cost-of-living bound.} 
    \label{fig:geks_vs_gss_bounds_position}
\end{figure}

\paragraph{World output.} I now calculate world output by using the GSS parities to value expenditure for each country in 2017 using  observed consumption (i.e. the basic heading GSS PPPs, which are the geometric mean of the revealed preference cost-of-living bounds, are simply multiplied by the associated volumes). Given the distinctions between the two parities, the implied size of world output is higher when using the reference consumer's cost of living to value it compared to the ICP GEKS values. However, the output gap between the two methods is even larger than that suggested by the mean PPP difference: world output in US dollars under the generalised star system was 45.63\% larger in 2017 when compared to the ICP GEKS-based valuation. To put this figure in context, the magnitude of this difference is roughly comparable to the 49.97\% increase in the size of the world economy that year when calculated using PPPs versus market exchange rates (\cite{worldbank_ICP2017}).

\paragraph{Cross-country inequality.} With output based on GSS parities in hand, I now turn to analysing inter-country income inequality using these expenditures. Figure \ref{fig:geks_vs_gss_lorenz} compares the distribution of output per capita implied by the ICP and generalised-star PPPs in a pair of Lorenz curves showing the cumulative percentages of population against cumulative expenditure (beginning with the poorest economy), with perfect income inequality denoted by the 45-degree line. The world is markedly more equal when output is valued using the GSS parities. To interpret this difference in practical terms, the inter-country Gini coefficient, which reflects the area between the Lorenz curve and the equality line, falls from 0.442 for the ICP to 0.346 for the generalised-star and is a reduction comparable in magnitude to the pre- and post-tax change for OECD countries, and is roughly three times larger than the reduction in inequality between 2011 and 2017 measured by the ICP for the per-capita Actual Individual Consumption aggregate (\cite{worldbank_ICP2017}).

\begin{figure}[h!]
    \centering
    \includegraphics[width=0.75\textwidth]{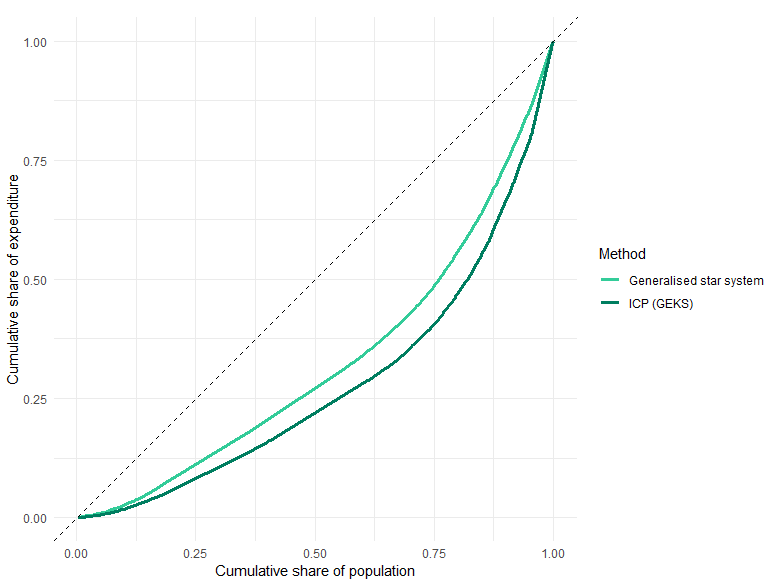}
    \caption{Lorenz curves for per-capita output using by the generalised star system PPPs (light green) and the GEKS parities underlying the standard ICP calculation (dark green). \textbf{Note:} The inter-country Gini coefficients for the respective parities are 0.442 and 0.346.} 
    \label{fig:geks_vs_gss_lorenz}
\end{figure}

Given the finding of the previous section that the GEKS index used by the ICP has a valid empirical welfare interpretation, a natural question about the difference between GEKS and the generalised-star statistics for output and inequality is `why should one prefer the GSS index if both itself and GEKS are welfare-consistent?' This question is important because the bounds themselves silent about the `best' index that satisfies them, while at the same time the differences between valid indices can materially change the value of global aggregates. In this light, perhaps the most compelling reason for choosing the GSS is because we do not know where in the bounds the true cost of living is located. Since it is based on utility, the cost of living is an unobservable object. The bounds from which the GSS are derived make minimal assumptions about the form of the reference consumer's tastes (aside from being rational and consistent with the data). Additionally, by choosing the geometric average, the resulting generalised-star system has an advantage in the presence of uncertainty about where in the permitted interval any true index lies since when the variance in the bounds' values is small relative to their scale, the GSS value will be close to the arithmetic mean (by the result of \cite{cartwright1978refinement}) which minimises loss under a naive ignorance prior. Further information about the form of tastes, for example through additional data, would of course help determine whether the cost of living lies close to a given index, but such a question is not within this paper's scope.

\section{Conclusion} \label{sec:discussion}

The welfare basis of international price and income comparisons is a critical issue for economic policy issues ranging from disbursing global aid flows to measuring living standards throughout the world. From the viewpoint of welfare however, many existing methods for conducting such comparisons can be `answers without questions' in the sense that the validity of the assumptions required for multilateral indices to have a cost-of-living interpretation can at times be unclear with respect to either or both the theory and data. The first main message of this paper is a comforting one in this regard: with the exception of market exchange rates, this problem is generally not a cause for concern because most popular comparison methods can be treated as if they possess a cost-of-living interpretation whether they intend to or not. The second relates to the major applications of the comparisons themselves: when measured using an index based on the midpoint of the non-homothetic cost of living bounds that is consistent with the data and incorporates income effects, the world was both larger and fairer in 2017 than was suggested by conventional estimates.

Despite attempting to clarify the welfare basis of international comparisons for existing methods, the methods I develop for doing so are not without their own limitations. Beyond the conceptual and methodological shortcomings already discussed, one major potential limitation of my two applications is that they are ultimately only as useful as the quality of the underlying dataset. Despite the seminal contribution of the ICP data to enabling cross-country price and income comparisons to exist at all, these data are known to possess a number of limitations with respect to their collection and measurement. Perhaps the most detailed decomposition of these measurement errors (spanning imputation, sampling, quality, and variety biases) is in \textcite{argente2023measuring} who leverage matched barcode-level data for the United States and Mexico to conduct their study. Using data free of such biases would improve the results, help further clarify the prevalence of the welfare inconsistencies identified in this paper, and improve the accuracy of the cost of living values from the generalised star system. 

There are several potentially fruitful extensions to the ideas in this paper. One is to broaden the welfare comparisons beyond the five indices considered to the multitude of others that exist in the literature: doing so would shed additional light on the impact of other theoretical properties on indices' welfare validity. Relatedly, while the focus of my appraisal application was primarily on widely-used non-parametric price index formulae, as demonstrated by the inclusion of market exchange rates, many other objects could be studied using the logic of this paper. For example, the bounds in Proposition \ref{prop:nonparam_col_bounds} and associated generalised star values could be compared against parametric welfare measures such as those in \textcite{jones2016beyond} or \textcite{cavallo2023product}. Such comparisons may begin to depart from the main research question of this study as the data underlying such indices may not be like-for-like, but this is not a fatal flaw. For example, the revealed preference test in Proposition \ref{prop:ref_consumer_characterisation} could be extended to account for data other than price and quantities through concomitant results about general budget sets in the revealed preference literature (e.g. \cite{matzkin1991axioms}). Beyond welfare measures, while the research question in this paper concerns comparisons of welfare over space, the methods in this paper could also be applied to problems relating to measuring the cost of living over time such as taste shocks and the changing structure of consumption and production. Future work could also investigate the determinants of the heterogeneity between countries documented by the empirical results of the paper such as the correlates and reasons for why certain countries exhibit the error rates, bound improvements, and movement in welfare ranks that they are found to have.

\appendix
\section{Proofs}
\label{sec:proofs}

\subsection{Proof of Proposition \ref{prop:ref_consumer_characterisation}}
\label{proof:ref_consumer_characterisation}

\textbf{($\Rightarrow$)} Suppose there exists a reference consumer for $\mathcal{D}$ with the tastes $u_{RC}(\mathbf{q})$ described in 1. Consider any cycle in $G$ for these data where $w(\upsilon_\ell, \upsilon_{\ell+1}) \leq 1$ for $\ell = 1, 2, \dots, k$. Along these cycles we have $\mathbf{p}_{\ell}\cdot\mathbf{q}_{\ell+1}\leq\mathbf{p}_{\ell}\cdot\mathbf{q}_{\ell}$ where by Definition \ref{def:ref_consumer} the utility function $u_{RC}(\mathbf{q})$ assigns $u_{RC}(\mathbf{q_{\ell+1}}) \leq u_{RC}(\mathbf{q_{\ell}})$.

I now claim that $\mathbf{p}_{\ell}\cdot\mathbf{q}_{\ell+1}=\mathbf{p}_{\ell}\cdot\mathbf{q}_{\ell}$ for all $\ell$. Suppose this was not true so that for some pair along the cycle we observe $\mathbf{p}_{m}\cdot\mathbf{q}_{m+1}<\mathbf{p}_{m}\cdot\mathbf{q}_{m}$. Non-satiation of $u_{RC}(\mathbf{q})$ with the fact that $u_{RC}(\mathbf{q})$ rationalises $\mathcal{D}$ means that we must have the strict $u_{RC}(\mathbf{q}_{m+1}) < u_{RC}(\mathbf{q}_{m})$, for if we had $u_{RC}(\mathbf{q}_{m+1}) = u_{RC}(\mathbf{q}_{m})$ we would be able to locate a nearby affordable bundle $\mathbf{p}_{m}\cdot\mathbf{q}_{m+\epsilon}<\mathbf{p}_{m}\cdot\mathbf{q}_{m}$ with $u_{RC}(\mathbf{q}_{m+\epsilon}) > u_{RC}(\mathbf{q}_{m+1})=u_{RC}(\mathbf{q}_{m})$ contradicting the fact that $u_{RC}(\mathbf{q})$ rationalises the observed choice $\mathbf{q}_{m}$. 

In turn, having $u_{RC}(\mathbf{q}_{m+1}) < u_{RC}(\mathbf{q}_{m})$ at some point within the considered cycle is problematic because closing the cycle yields the impossibility $u_{RC}(\mathbf{q}_{m}) > u_{RC}(\mathbf{q}_{m+1}) \geq u_{RC}(\mathbf{q}_{m+2}) \geq \cdots \geq u_{RC}(\mathbf{q}_{m})$. We conclude that the wedge weights $w(\upsilon_\ell, \upsilon_{\ell+1}) = 1$ for all $\ell$ which establishes CEWEC. Notice that the other properties of $u_{RC}(\mathbf{q})$ are required for this argument.

\textbf{($\Leftarrow$)} To begin, observe that the CEWEC is a graph-theoretic re-statement of the `cyclical consistency' condition of \textcite{afriat_construction_1967}. Cyclical consistency is defined as $\mathbf{p}_i \cdot \mathbf{q}_{i+1} \geq \mathbf{p}_i \cdot \mathbf{q}_i \; \forall i = 1, \dots, k \; (\text{mod } k) \Rightarrow \mathbf{p}_i \cdot \mathbf{q}_{i+1} = \mathbf{p}_i \cdot \mathbf{q}_i \; \forall i$ which is obviously equivalent to the CEWEC. 

As a result, the remainder of the argument is the standard, substantive step of Afriat's Theorem for which a large number of proofs exist; for this reason none are repeated here. For example, in addition to the original constructive proof of \textcite{afriat_construction_1967} this step can also be derived through linear programming (\cite{diewert1973afriat}) and even via the Minimax Theorem of two-player zero-sum games (\cite{geanakoplos2013afriat}). \hfill $\blacksquare$	

\subsection{Proof of Proposition \ref{prop:nonparam_col_bounds}}
\label{proof:nonparam_col_bounds}

I prove Proposition \ref{prop:nonparam_col_bounds} in two steps: I first (1) bound the expenditure function, and then (2) use this result to obtain bounds for the cost of living.

\textbf{Step 1: obtain revealed preference bounds on the expenditure function}. Let us first consider the problem of bounding the reference consumer's expenditure function. These bounds are the quantities in \eqref{eq:M_minus} and \eqref{eq:M_plus} within Proposition \ref{prop:nonparam_col_bounds} which I repeat here for convenience:
    \begin{align}
        M^{-}(\mathbf{p},\upsilon) &= \inf_{\mathbf{q}} \mathbf{p}.\mathbf{q} \text{ such that } \mathbf{p}_{i}\mathbf{q}\geq \mathbf{p}_{i}\mathbf{q}_{i} \text{ for all } i \in VRW(\upsilon) \label{eq:M_minus_v2} \\
        M^{+}(\mathbf{p},\upsilon) &= \min_{\mathbf{q}} \mathbf{p}.\mathbf{q}_{i} \text{ such that } i \in VRP(\upsilon) \label{eq:M_plus_v2}
    \end{align}
I first prove a lemma showing that $M^{-}(\mathbf{p},\upsilon)$ and $M^{+}(\mathbf{p},\upsilon)$ do indeed bound the reference consumer's expenditure function. This result is well-known within the revealed preference literature. My approach follows \textcite{varian_nonparametric_1982} and \textcite{knoblauch_tight_1992} who were the first to conjecture and prove the result. An equivalent statement using a procedure involving Engel curves is at \textcite{blundell2003nonparametric}.
\begin{lemma} \label{lemma:exp_nonparam_bounds}
\textbf{(\cite{varian_nonparametric_1982}, Fact 11(i); Fact 5)} For data $\mathcal{D}$ satisfying Proposition \eqref{prop:ref_consumer_characterisation} it holds that
    \begin{align}
        M^{-}(\mathbf{p},\upsilon) \leq e(\mathbf{p},u_{RC}(\mathbf{q}_\upsilon))) \leq M^{+}(\mathbf{p},\upsilon)
        \label{eq:exp_bounds}
    \end{align}
\end{lemma}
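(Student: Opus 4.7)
The plan is to prove the two inequalities separately, in each case exploiting the interpretation of $VRP(\upsilon)$ and $VRW(\upsilon)$ as revealed-preference transitive closures (established by the RP-walks of Definition \ref{def:rp_walk}) under the tastes $u_{RC}$ guaranteed by Proposition \ref{prop:ref_consumer_characterisation}.

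For the upper bound, I would begin by observing that for any $i \in VRP(\upsilon)$ there is an RP-walk from $\upsilon_i$ to $\upsilon$, which means that the pairwise edge-weight inequalities along the walk encode direct revealed preferences of $\mathbf{q}_i$ over the next bundle, then over the next, and so on. Since $u_{RC}$ rationalises $\mathcal{D}$, each direct revealed-preference step implies the corresponding utility inequality, and composing these gives $u_{RC}(\mathbf{q}_i) \geq u_{RC}(\mathbf{q}_\upsilon)$. Because $\mathbf{q}_i$ therefore attains at least the target utility level at prices $\mathbf{p}$, it is feasible in the expenditure-minimisation problem defining $e(\mathbf{p}, u_{RC}(\mathbf{q}_\upsilon))$, so $e(\mathbf{p}, u_{RC}(\mathbf{q}_\upsilon)) \leq \mathbf{p}\cdot\mathbf{q}_i$. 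Taking the minimum over $i \in VRP(\upsilon)$ yields $e(\mathbf{p}, u_{RC}(\mathbf{q}_\upsilon)) \leq M^{+}(\mathbf{p}, \upsilon)$.

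For the lower bound, I would fix an arbitrary bundle $\mathbf{q}^*$ achieving utility $u_{RC}(\mathbf{q}_\upsilon)$ and show it is feasible in the programme defining $M^{-}(\mathbf{p}, \upsilon)$, i.e.\ that $\mathbf{p}_i \cdot \mathbf{q}^* \geq \mathbf{p}_i \cdot \mathbf{q}_i$ for every $i \in VRW(\upsilon)$. The argument is by contradiction: suppose $\mathbf{p}_i \cdot \mathbf{q}^* < \mathbf{p}_i \cdot \mathbf{q}_i$ for some such $i$. By non-satiation of $u_{RC}$ one can perturb $\mathbf{q}^*$ to a nearby bundle $\mathbf{q}^{**}$ still strictly inside country $i$'s budget but with $u_{RC}(\mathbf{q}^{**}) > u_{RC}(\mathbf{q}^*) = u_{RC}(\mathbf{q}_\upsilon)$. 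Since $u_{RC}$ rationalises $\mathbf{q}_i$ against this budget we obtain $u_{RC}(\mathbf{q}_i) \geq u_{RC}(\mathbf{q}^{**}) > u_{RC}(\mathbf{q}_\upsilon)$. But $i \in VRW(\upsilon)$ means there is an RP-walk from $\upsilon$ to $\upsilon_i$, which, by the same chain-of-direct-preferences argument as before, implies $u_{RC}(\mathbf{q}_\upsilon) \geq u_{RC}(\mathbf{q}_i)$, a contradiction. Hence $\mathbf{q}^*$ is feasible in the $M^{-}$ programme, giving $\mathbf{p} \cdot \mathbf{q}^* \geq M^{-}(\mathbf{p}, \upsilon)$. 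Taking the infimum over all such $\mathbf{q}^*$ delivers $e(\mathbf{p}, u_{RC}(\mathbf{q}_\upsilon)) \geq M^{-}(\mathbf{p}, \upsilon)$.

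The main obstacle will be the lower bound step. Unlike the upper bound, which is a direct feasibility-and-minimisation argument, the lower bound requires the perturbation trick to convert a weak budget violation into a strict utility ranking that conflicts with the RP-walk. This step is exactly where the non-satiation and (for the perturbation to be well-defined) continuity properties listed in Proposition \ref{prop:ref_consumer_characterisation}(1) are doing work, so I would make explicit use of those hypotheses. Concavity and strict monotonicity play no role here; they are reserved for the earlier rationalisation result. Once feasibility is established, the rest is just taking infima, and the sharpness of the bounds (which is stated but not proved here) would follow from the standard construction of a rationalising utility that attains $M^{-}$ and $M^{+}$ at the boundary, along the lines of the Afriat constructions referenced in the proof of Proposition \ref{prop:ref_consumer_characterisation}.
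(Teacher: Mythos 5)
Your proof is correct and is essentially the paper's argument: the paper packages your two feasibility claims as the set inclusions $QRP(\upsilon)\subseteq E(\upsilon)\subseteq QRW^{c}(\upsilon)$ (following Varian) and then takes infima over nested sets, but the substantive steps --- converting RP-walks into utility orderings via rationalisation, and the non-satiation perturbation that turns strict budget slack into a strict utility gain for the lower bound --- are the same. One cosmetic fix: your lower-bound argument should range over all bundles with $u_{RC}(\mathbf{q}^{*})\ge u_{RC}(\mathbf{q}_{\upsilon})$ (the feasible set of the expenditure-minimisation problem), not only those attaining equality; the argument goes through unchanged.
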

\begin{proof}
    To begin, I first enlarge two of the sets in Definition \ref{def:rp_sets} to allow for a potentially unobserved country with data $(\mathbf{p},\mathbf{q})$. To this end, define the extended graph $G_{G\cup \mathbf{q}}$ as $G$ with a (potentially) new vertex $\upsilon_{\mathbf{q}}$, and (2) the $N$ (potentially) new weighted directed edges terminating at $\upsilon_{\mathbf{q}}$. Now consider the expanded strict analogues of $VRP(\upsilon)$ and $VRW(\upsilon)$,
    \begin{align}
    \operatorname{QRP}(v) &= \{\,\mathbf{q} : \upsilon P \upsilon_{\mathbf{q}} \,\}, \label{eq:QRP}\\
    \operatorname{QRW}(v) &= \{\,\mathbf{q} : \upsilon_{\mathbf{q}} P \upsilon \,\}. \label{eq:QRW}
    \end{align}
We can now re-define $M^{-}(\mathbf{p},\upsilon)$, $M^{+}(\mathbf{p},\upsilon)$, and $e(\mathbf{p},u_{RC}(\mathbf{q}_\upsilon)))$ in the convenient forms
\begin{align}
    M^{-}(\mathbf{p},\upsilon) &= \inf \mathbf{p}.\mathbf{q} \text{ such that } \mathbf{q}\in QRW^{c}(\upsilon) \\
    M^{+}(\mathbf{p},\upsilon) &= \inf \mathbf{p}.\mathbf{q} \text{ such that } \mathbf{q}\in QRP(\upsilon) \\
    e(\mathbf{p},u_{RC}(\mathbf{q}_{\upsilon})) &= \inf \mathbf{p}.\mathbf{q} \text{ such that } \mathbf{q} \in E(\upsilon)
\end{align}
where $E(\upsilon)\equiv \{\mathbf{q}|u_{RC}(\mathbf{q})>u_{RC}(\mathbf{q}_{\upsilon})\}$. Next, I appeal to the reference consumer's preferences to link elements of $QRW(\upsilon)$ and $QRP(\upsilon)$ with their associated utility levels. To make this link, observe that for any qualifying walk where $\upsilon P \upsilon_{i}$, it must be the case that $u_{RC}(\mathbf{q}_{\upsilon})>u_{RC}(\mathbf{q}_{i})$ as a result of $\mathcal{D}$ being rationalisable by $u_{RC}(\mathbf{q}_{\upsilon})$ via Proposition \ref{prop:ref_consumer_characterisation}. By this reasoning, any $\mathbf{q}\in QRW(\upsilon)$ has the property that $u_{RC}(\mathbf{q})< u_{RC}(\mathbf{q}_{\upsilon})$. Similarly, for any $\mathbf{q}\in QRP(\upsilon)$ we have $u_{RC}(\mathbf{q})>u_{RC}(\mathbf{q}_{\upsilon})$.

The remainder of the argument follows an identical logic to \textcite{varian_nonparametric_1982}. I claim that
\begin{align*}
    QRP(\upsilon) \subseteq E(\upsilon) \subseteq QRW^{c}(\upsilon).
\end{align*}
Examining $QRP(\upsilon) \subseteq E(\upsilon)$, since any $\mathbf{q}\in QRP(\upsilon)$ has $u_{RC}(\mathbf{q})>u_{RC}(\mathbf{q}_{\upsilon})$, so $\mathbf{q}\in E(\upsilon)$ as well. In contrast, any $\mathbf{q} \in E(\upsilon)$ has $u_{RC}(\mathbf{q})>u_{RC}(\mathbf{q}_{\upsilon})$, which is not compatible with being in $QRW(\upsilon)$. Thus $\mathbf{q}\in QRW^{c}(\upsilon)$ and $E(\upsilon) \subseteq QRW^{c}(\upsilon)$. To conclude, as the infimum of a set can only be smaller or equal to that of a subset of that set we have
\begin{align*}
    \inf \{\mathbf{p}.\mathbf{q}:\mathbf{q}\in QRP(\upsilon)\} 
    \geq \inf \{\mathbf{p}.\mathbf{q}:\mathbf{q}\in E(\upsilon)\} \geq \inf \{\mathbf{p}.\mathbf{q}:\mathbf{q}\in QRW^{c}(\upsilon)\}
\end{align*}
which yields the desired result.
\end{proof}
An important fact about \eqref{eq:M_minus_v2} and \eqref{eq:M_plus_v2} is that they are the tightest possible bounds on $e(\mathbf{p},u_{RC}(\mathbf{q}_\upsilon)))$. This result was proven by \textcite{knoblauch_tight_1992} so is omitted here.

\textbf{Step 2: obtain multilateral revealed preference bounds on the cost of living}. I am now in a position to bound the cost of living. To this end, consider \eqref{eq:bound_improvements_laspeyres}. I first argue that $\frac{M^{+}(\mathbf{p}_{j},\upsilon_{i})}{M^{+}(\mathbf{p}_{i},\upsilon_{i})} \leq \frac{\mathbf{p}_{j}.\mathbf{q}_{i}}{\mathbf{p}_{i}.\mathbf{q}_{i}}$. Since $M^{+}(\mathbf{p}_{i},\upsilon_{i})=m_{i}$, it suffices to show that $M^{+}(\mathbf{p}_{j},\upsilon_{i}) \leq \mathbf{p}_{j}.\mathbf{q}_{i}$. Suppose there exists a vertex $k\in VRP(\upsilon_{i})$ such that $M^{+}(\mathbf{p}_{j},\upsilon_{i})=\mathbf{p}_{j}.\mathbf{q}_{k}>\mathbf{p}_{j}.\mathbf{q}_{i}$. Now, note that $i \in VRP(\upsilon_{i})$ since the edge weight $w(\upsilon_{i},\upsilon_{i})=1$. Then $\mathbf{q}_{k}$ cannot be the bundle defining $M^{+}(\mathbf{p}_{j},\upsilon_{i})$ since $\mathbf{q}_{i}$ is also feasible and produces a strictly smaller expenditure than $\mathbf{q}_{k}$ at $\mathbf{p}_{j}$.

Next I show that $\min\left\{{\frac{p_{j}^{k}}{p_{i}^{k}}}\right\} \leq \frac{M^{-}(\mathbf{p}_{j},\upsilon_{i})}{M^{-}(\mathbf{p}_{i},\upsilon_{i})}$. First, rewrite the right hand side of the inequality as
    \begin{align} \label{eq:prop1_weight_step}
        \frac{M^{-}(\mathbf{p}_{j},\upsilon_{i})}{M^{-}(\mathbf{p}_{i},\upsilon_{i})}=\frac{\mathbf{p}_{j}.\mathbf{q}_{M^{-}}}{m_{i}}
    \end{align}
Multiplying by $(p_{i}/p_{i})$ produces a weighted sum of price relatives
    \begin{align}
        \frac{\mathbf{p}_{j}.\mathbf{q}_{M^{-}}}{m_{i}}=
        \sum^{K}_{k=1}\left(\frac{p_{j}^{k}}{p_{i}^{k}}\right)\left(\frac{q_{M^{-}}^{k}p_{i}^{k}}{m_{i}}\right)
    \end{align}
Since $i \in VRP(\upsilon_{i})$ it must be true that $\mathbf{p}_{i}\mathbf{q}_{M^{-}}\geq \mathbf{p}_{i}\mathbf{q}_{i}$. It follows that the sum of weights in (15) $\sum^{K}_{k} \frac{q_{M^{-}}^{k}p_{i}^{k}}{m_{i}}\geq \sum^{K}_{k} \frac{q_{i}^{k}p_{i}^{k}=1}{m_{i}}=1$, with the latter set of weights clearly being greater than or equal to $\min\{(p_{j}^{k}/p_{i}^{k})\}$.

It remains to show that the inner inequalities $\frac{M^{-}(\mathbf{p}_{j},\upsilon_{i})}{M^{-}(\mathbf{p}_{i},\upsilon_{i})} \leq \frac{e(\mathbf{p}_{j},u_{RC}(\mathbf{q}_{i}))}{e(\mathbf{p}_{i},u_{RC}(\mathbf{q}_{i}))} \leq \frac{M^{+}(\mathbf{p}_{j},\upsilon_{i})}{M^{+}(\mathbf{p}_{i},\upsilon_{i})}$ hold. This follows immediately from Lemma \ref{lemma:exp_nonparam_bounds} as each term's denominator is equal to $m_{i}$. 
    
The proof of \eqref{eq:bound_improvements_paasche} — the Paasche-style counterpart to \eqref{eq:bound_improvements_laspeyres} — follows a similar logic to the Laspeyres-style case of \eqref{eq:bound_improvements_laspeyres} so is omitted. \hfill $\blacksquare$	

\subsection{Proof of Fact \ref{fact:bounds_tighten_w_n}}
\label{proof:bounds_tighten_w_n}

Suppose \eqref{eq:more_obs1} or \eqref{eq:more_obs2} were not true. This implies the removal of a country walk from either $VRW(\upsilon)$ or $VRP(\upsilon)$ from $G$. This is clearly impossible since $G'$ contains $G$ as a subgraph so possesses at least all of $G$'s walks. \hfill $\blacksquare$	

\printbibliography
\newpage
\section{Supplemental Appendix}

\subsection{Further notes about the data and their construction}
\label{subsec:further_data_notes}

\paragraph{1. Converting PPPs and nominal expenditures into prices and quantities.} It is straightforward to transform these raw data into the prices and quantities $\mathbf{p}_{n}$ and $\mathbf{q}_{n}$ described in the setting in Section \ref{sec:notation}. As is standard in the international comparisons literature (see for example \cite{diewert_axiomatic_1999} and \cite{diewert_methods_2013}), we simply adopt the unit of a base currency — in this case the US dollar — as a common numeraire across goods. This imparts the PPP $\pi^{k}_{ib}$ the interpretation as a price \textit{level} for good $k$ in units of country $b$'s currency which in turn yields the implicit quantities or volumes $q^{k}_{i}\equiv e_{i}^{k}/\pi^{k}_{ib}=p_{b}^{k}r_{i}^{k}$ (with $r_{i}^{k}$ denoting quantities in the units of the basic heading $k$) again expressed in the units of the numeraire currency. Thus, the Laspeyres quantity index (for example, as in \eqref{eq:paasche_edge_weight}) between any desired country pair can be expressed through PPPs and nominal expenditures alone since $\mathbf{p}_{m}.\mathbf{q}_{n}/\mathbf{p}_{m}.\mathbf{q}_{m}=\sum^{K}_{k}(\pi_{mb}^{k}/\pi_{nb}^{k})e_{n}^{k}/\sum^{K}_{k}e_{m}^{k}$.

\paragraph{2. Rationale for not linking further ICP waves.} My main application uses prices and per-capita demands from the ICP’s 2017 reference year for a subset of basic headings corresponding to consumption. Although the ICP has collected data from as early as 1970, and although Proposition \ref{prop:ref_consumer_characterisation} can easily be adapted to panels as well as to cross-sections, it is generally not appropriate to pool data across ICP waves due to the differences in the production of the statistics over time. For example, the 2017 reference year utilised a different national accounts framework compared to previous waves and ICP waves prior to 2005 adopted a different taxonomy of basic headings and method for linking regional results relative to previous waves. These issues are documented in \textcite{worldbank_ICP2017} and reflect the fact that the ICP is an ongoing initiative subject to continued methodological improvements. 

\paragraph{3. Excluded basic headings.} With respect to excluded basic headings, the 155 categories in the data include sub-aggregates related to government spending, investment, and the trade balance, in addition to consumption. Since revealed preference theory only applies to the last of these categories, I exclude the others from the sample. The resulting set of consumption sub-aggregates is a standard statistics aggregate known as Household and NPISHs' Final Consumption Expenditure. I also exclude basic headings with negative expenditures and countries with missing prices at the basic heading level: doing so is necessary for a valid test of Proposition \ref{prop:ref_consumer_characterisation} and is standard practice in studies using revealed preference techniques (see for example \cite{dowrick_international_1994}, who use data from the ICP's 1980 reference year). 

\subsection{Multilateral indices selected in the appraisal application} \label{subsec:index_overview}

\begin{table}[h!]
\centering
\renewcommand{\arraystretch}{1.3} 
\begin{tabularx}{\textwidth}{>{\raggedright\arraybackslash}p{5.5cm} p{9cm}}
\toprule
\toprule
\textbf{Index number method} & \textbf{Example application} \\
\midrule
\makecell[l]{\textbf{Geary-Khamis (GK)} \\[4pt]
$P_{n}^{GK} = \dfrac{\sum_{k}\pi_{k}q_{nk}}{\sum_{k}p_{nk}q_{nk}}$ \\[4pt]
$\pi_{k} = \dfrac{\sum_{n}P_{n}^{GK}m_{nk}}{\sum_{n}q_{nk}}$
}
& Used to construct real GDP measures within the Penn World Table (\cite{feenstra2015next}) and in the ICP prior to 1990. \\
\midrule
\makecell[l]{\textbf{GEKS} \\[4pt]
$P_{ij}^{GEKS} = \left(\prod_{k=1}^{N}\frac{P_{jk}^{Fisher}} {P_{ki}^{Fisher}}\right)^{\frac{1}{N}}$ \\[4pt] where \\[4pt]
$P_{ij}^{Fisher}=\left(
            \frac{\mathbf{p}_{i}\cdot\mathbf{q}_{j}}{\mathbf{p}_{j}\cdot\mathbf{q}_{j}}
            \cdot
            \frac{\mathbf{p}_{j}\cdot\mathbf{q}_{i}}{\mathbf{p}_{i}\cdot\mathbf{q}_{j}}\right)^{\frac{1}{2}}$
}
& Used to construct modern PPPs and real GDP measures within the ICP after 1990 (e.g. see \cite{diewert_methods_2013}). \\
\midrule
\makecell[l]{\textbf{CCD} \\[4pt]
$P_{ij}^{CCD} = \left(\prod_{k=1}^{N}\frac{P_{ik}^{T\mathaccent"7F o rnqvist}}{P_{kj}^{T\mathaccent"7F o rnqvist}}\right)^{\frac{1}{N}}$ \\[4pt] where \\[4pt]
$P_{ij}^{T\mathaccent"7F o rnqvist} = 
\prod_{k=1}^{K} \left(\frac{p_{jk}}{p_{ik}}\right)^{\frac{1}{2}(s_{ik}+s_{jk})}$ \\[4pt]
$s_{ik}= \dfrac{p_{ik}q_{ik}}{\sum^{K}_{m=1}p_{im}q_{im}}$
}
& Used to compare cross-country output (e.g. \cite{rao1995generalized}) and productivity (e.g. \cite{fare1994productivity}). \\
\bottomrule
\bottomrule
\end{tabularx}
\caption{A summary of the appraised multilateral indices and their uses.}
\label{tab:index_overview}
\end{table}

\subsection{Correcting indices for preference misspecification}
\label{subsec:taste_corrected_indices}

It may sometimes be desirable to use an index for applications even if it is not fully welfare-consistent. To support such scenarios, here I show how to construct a taste‑corrected version of any multilateral index by correcting for errors from preference misspecification.

My solution is extremely simple and is grounded on the fact that whenever a country‑pair index violates one of its theoretical bounds
(\(\mathcal{M}^{-}_{ij},\mathcal{M}^{+}_{ij}\)),
the unknown distance to the true, bound‑consistent value must lie within the interval
\([\mathcal{M}^{-}_{ij},\mathcal{M}^{+}_{ij}]\).
Under symmetric measurement error or a uniform ignorance prior, the expected value of the true index is the arithmetic midpoint
\(\tilde{\mathcal{M}}_{ij}\equiv(\mathcal{M}^{-}_{ij}+\mathcal{M}^{+}_{ij})/2\).
I therefore replace any offending observation with \(\tilde{\mathcal{M}}_{ij}\),
which restores welfare consistency and minimises the mean‑squared and mean‑absolute errors relative to the unobserved true bound under a uniform density, i.e.
\begin{align}
\tilde{\mathcal{M}}
    \;=\;
    \arg\min_{\hat m}
    \int_{\mathcal{M}^{-}}^{\mathcal{M}^{+}}
         \frac{|m-\hat m|^{\,p}}{\mathcal{M}^{+}-\mathcal{M}^{-}}\,
         dm
    \qquad p\in\{1,2\}
    \label{eq:error_integral}
\end{align}
where $p=1$ and $2$ respectively gives the mean‑absolute and mean-squared error criterion.

Table \ref{tab:taste_corrected_geks} demonstrates this idea with the taste-corrected observations for the GEKS index with the ICP convention of using the United States as the base country. Reflecting the findings of the earlier appraisal application, only a handful of countries require correction to be consistent with an admissable cost-of-living interpretation and the magnitude of these corrections is small in magnitude.
\begin{table}[ht]
\centering
\begin{tabular}{lrrrr}
\toprule
\toprule
Country & GEKS & $\mathcal{M}^{-}$ & $\mathcal{M}^{+}$ & Corrected GEKS ($\tilde{\mathcal{M}}$)\\ 
\hline
  Austria (AUT) & 0.83 & 0.56 & 0.82 & 0.69 \\ 
  Switzerland (CHE) & 1.33 & 1.04 & 1.30 & 1.17 \\ 
  Curaçao (CUW) & 1.49 & 0.57 & 1.46 & 1.02 \\ 
  Czech Republic (CZE) & 13.90 & 5.71 & 13.78 & 9.75 \\ 
  France (FRA) & 0.82 & 0.46 & 0.81 & 0.64 \\ 
  Greece (GRC) & 0.66 & 0.31 & 0.65 & 0.48 \\ 
  Japan (JPN) & 116.99 & 58.54 & 106.92 & 82.66 \\ 
  South Korea (KOR) & 1003.21 & 403.90 & 947.83 & 675.87 \\ 
  Taiwan (TWN) & 16.56 & 9.59 & 16.35 & 12.97 \\ 
\bottomrule
\bottomrule
\end{tabular}
\caption{The taste-corrected GEKS index for PPPs relative to the United States for observations violating the reference consumer's cost-of-living bounds.}
\label{tab:taste_corrected_geks}
\end{table}

\subsection{Analogous results under homothetic tastes} \label{subsec:harp}

The price index defined in Section \ref{subsec:fat_star_og} is transitive when an indifference level $u(\mathbf{q})$ is fixed, but many such levels could be chosen for the purposes of the desired comparison with different index results as a consequence of the non-homothetic tastes underlying the index. For many applications it is desirable to construct preference-based price indices that do not exhibit such dependence on an indifference level (or for quantity indices, dependence on an associated price vector). As has long been known (\cite{malmquist1953index}; \cite{samuelson1974invariant}; \cite{Pollak1971a}) homothetic tastes can produce such an index because the expenditure function can be decomposed into the form $f(\mathbf{q})g(u(\mathbf{q}))$. It is straightforward to extend the ideas in Section \ref{sec:intlrefcons} to the special case of homothetic preferences and given this, I now the homothetic analogues to the main concepts and results in this paper for readers interested in indices of this form.

\textbf{Multilateral indices based on homothetic revealed preference}: The homothetic analogue to the non-parametric bounds in Proposition \ref{prop:nonparam_col_bounds} are not new within the revealed preference literature. The homothetic analogue to GARP — the Homothetic Axiom of Revealed Preference (HARP) — was first discovered by \textcite{diewert1973afriat}. This restriction and the associated cost-of-living bounds (\cite{afriat_constructability_1981}) are the basis of the multilateral Afriat Ideal Index introduced by \textcite{dowrick_true_1997}. HARP defines a reference consumer analogous to that in Proposition \ref{prop:ref_consumer_characterisation} which is easily described in graph theoretic terms:
\begin{theorem}
    \textbf{(\cite{diewert1973afriat}; \cite{afriat_constructability_1981}; \cite{varian_non-parametric_1983})} There exists a nonsatiated homothetic utility function that rationalizes the data if and only if for all distinct choices of the indexes $(i,j,...,m)$ we have
    \begin{align}
    w_{ij}(w_{jk})...(w_{mi}) \geq 1 
    \label{eq:harp_weight_product}
    \end{align}
    for all closed walks $\upsilon_{i} (\upsilon_{i},\upsilon_{j})...(\upsilon_{m},\upsilon_{i}) \upsilon_{i}$ in $G$.
    \label{theorem:harp}
\end{theorem}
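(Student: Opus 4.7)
My plan is to prove the biconditional by treating each direction separately, in the same spirit as Proposition \ref{prop:ref_consumer_characterisation}. For the necessity direction ($\Rightarrow$), suppose a nonsatiated homothetic utility rationalizes $\mathcal{D}$. Standard consumer theory lets us write the utility as $u(\mathbf{q}) = \phi(h(\mathbf{q}))$ with $h$ positively linearly homogeneous, nondecreasing, and concave, and $\phi$ monotone. Because $\mathbf{q}_i$ cost-minimizes at prices $\mathbf{p}_i$ for utility level $h(\mathbf{q}_i)$, the linear homogeneity of $h$ delivers the familiar bilateral bound $w_{ij} \geq h(\mathbf{q}_j)/h(\mathbf{q}_i)$. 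Multiplying along any closed walk $i_1 \to i_2 \to \cdots \to i_k \to i_1$ then telescopes: $\prod_{\ell=1}^{k} w_{i_\ell i_{\ell+1}} \geq \prod_{\ell=1}^{k} h(\mathbf{q}_{i_{\ell+1}})/h(\mathbf{q}_{i_\ell}) = 1$.

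The sufficiency direction ($\Leftarrow$) is the substantive step. The core idea is to read the product condition \eqref{eq:harp_weight_product} as a graph-theoretic statement about shortest paths. Taking logarithms transforms it into the claim that $G$ contains no negative cycle under the re-weighted edges $\log w_{ij}$. A standard consequence of the Bellman-Ford potential lemma then yields node labels $\psi_i$ satisfying $\log w_{ij} \geq \psi_j - \psi_i$ for every edge, equivalently $w_{ij} \geq u_j/u_i$ with $u_i \equiv e^{\psi_i} > 0$. I would then propose the candidate homothetic rationalizing utility
\begin{align*}
f(\mathbf{q}) \;=\; \min_{i \in \{1,\ldots,N\}} \frac{u_i\, \mathbf{p}_i \cdot \mathbf{q}}{m_i},
\end{align*}
which, as the pointwise minimum of a finite collection of linear functions with positive coefficients (since $\mathbf{p}_i \gg 0$), is automatically concave, positively linearly homogeneous, and nondecreasing on $\mathcal{Q}$, hence homothetic and nonsatiated.

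Verification then consists of two short checks. First, the potential inequality $u_i \mathbf{p}_i \cdot \mathbf{q}_j / m_i \geq u_j$ lower-bounds every term in the minimum at $\mathbf{q} = \mathbf{q}_j$, while the $j$-th term attains $u_j$ exactly, giving $f(\mathbf{q}_j) = u_j$. Second, for any $\mathbf{q}$ with $\mathbf{p}_i \cdot \mathbf{q} \leq m_i$, the $i$-th term in the min is at most $u_i = f(\mathbf{q}_i)$, which upper-bounds the full minimum and establishes that $\mathbf{q}_i$ maximizes $f$ on its budget set. The main obstacle I anticipate is invoking the potential-existence lemma cleanly enough to cover the equality cases in HARP and to ensure the resulting $f$ has the required monotonicity; these are routine issues that the revealed preference literature handles via mild perturbation of the weights or by appeal to the original constructive arguments in \textcite{afriat_constructability_1981} and \textcite{varian_non-parametric_1983}, either of which could be cited to complete the proof without repeating the technicalities.
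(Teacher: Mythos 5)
Your proof is correct, and it is essentially the classical argument from the sources the theorem cites (Afriat 1981; Varian 1983): necessity via the unit-cost factorisation of the homothetic expenditure function and telescoping along cycles, sufficiency via log-weights, the no-negative-cycle/potential lemma to produce the multiplicative Afriat numbers $u_i$ with $w_{ij}\geq u_j/u_i$, and the rationalising utility $\min_i u_i\,\mathbf{p}_i\cdot\mathbf{q}/m_i$. The paper itself gives no proof of this theorem, deferring entirely to those references, so your write-up fills that gap in the standard way; note only that your closing worry about ``equality cases'' is moot, since HARP requires only $\geq 1$ on cycles and zero-weight cycles cause no difficulty for the potential construction.
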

Turning to cost-of-living bounds based on these tastes, the existing literature (most notably \textcite{afriat_constructability_1981}, \textcite{manser_analysis_1988}, and \textcite{dowrick_true_1997}) derives bounds based on a minimum path matrix related to \eqref{eq:harp_weight_product} from which a variety of multilateral indices can be constructed. As these results are all established I refer the reader to these sources for further exposition of them.

Multilateral homothetic indices based on revealed preference are thus not new. However, as it currently stands, these indices' values are not correctly defined in the existing literature for countries outside of this set. Only one proposal exists in the literature for doing this: \textcite{dowrick_true_1997} suggest using the minimum of the bilateral Laspeyres valuations for any observation outside of the reference consumer set (p. 49). The problem with this idea is that while the Laspeyres valuation does indeed bound a cost-of-living function, the tastes underlying this function are not the same as the ones for the reference consumer underlying the countries that jointly admit a homothetic representation. I solve this problem by computing an index value based explicitly on the reference consumer's tastes using the logic of the generalised star system introduced in Section \ref{subsec:fat_star_og} which in turn shows how to correctly extend the Ideal Afriat Index to non-reference consumer countries. Specifically, I implement the generalised star system in Section \ref{subsec:fat_star_og} for homothetic tastes using the insight that the multiplicative condition in \eqref{eq:harp_weight_product} is equivalent to CEWEC holding for an augmented dataset, which I denote $\mathcal{D}_{H}$ comprising $\mathcal{D}$ plus a finite set of observations implied by the linear income expansion paths required by homotheticity. Algorithms to compute these implied observations were developed by \textcite{blundell2015sharp}. With $\mathcal{D}_{H}$ in hand, the homothetic cost-of-living bounds and feasible index values follow directly from Proposition \ref{prop:nonparam_col_bounds} and \eqref{eq:M_plus_outside_star} regardless of the whether a country's budget satisfies Theorem \ref{theorem:harp}.

\textbf{The homothetic reference consumer}: Unlike the results in the main application, my choice of homothetic reference consumer is not the largest possible set of countries that satisfy Theorem \ref{theorem:harp}. Since many countries do not admit a homothetic representation, such a group is not easy to find computationally. In lieu of this fact and to nonetheless identify a reference consumer taste of broad economic significance, I construct the reference consumer group using a procedure beginning with the greedy group-building algorithm based on price similarity described below. Algorithm \ref{alg:harp_rc_greedy_algo} starts with each of the 173 countries in the data, sorts the remaining countries by their cosine price similarity relative to the starting country, and then attempts to build a reference consumer group by iteratively adding countries to the group and testing their data against Theorem \ref{theorem:harp}. Countries that pass are added to the group for the next stage. The rationale for constructing the reference consumer's tastes based on a price similarity measure is twofold, and also has precedence in the empirical revealed preference literature (\cite{crawford_how_2013}). First, as was pioneered in the index number literature by \textcite{hill1999comparing} and noted in related work (for example, \textcite{hill2004constructing} and \textcite{diewert_methods_2013}), if the price structures of two countries are similar, then their cost-of-living bounds will be more accurate in the sense of the Paasche and Laspeyres indicies being closer to each other than would otherwise be the case. Second, price similarity also maximises the `power' of the underlying revealed preference tests by maximising the region of the budget plane that can lead to a rejection of Theorem \ref{theorem:harp}. After using Algorithm \ref{alg:harp_rc_greedy_algo} to generate $N=173$ candidate reference consumers, in line with the main application I then select the set that accounts for the highest share of world GDP and population among these candidates.

\singlespacing
\begin{algorithm}[H]
    \caption{Greedy Reference Consumer Candidate Construction}\label{alg:harp_rc_greedy_algo}
    \begin{algorithmic}[1]
        \Require Datasets $\mathcal{D}_{i}$ for $N$ countries, indexed by $i = 1,\dots, N$
        \For{$i = 1$ to $N$}
            \State Sort countries in order of descending cosine price similarity relative to $\mathbf{p}_i$
            \State Index the sorted order by $m = 1, \dots, M$
            \State Create candidate reference consumer group $RC_{i} \gets \varnothing$
            \For{$m = 1$ to $M$}
                \If{$\mathcal{D}_{m} \cup \left( \bigcup_{i \in RC_{i}} \mathcal{D}_{i} \right)$ satisfies Theorem~\ref{theorem:harp}}
                    \State $RC_{i} \gets RC_{i} \cup \{m\}$
                \Else
                    \State \textbf{continue} 
                \EndIf
            \EndFor
        \EndFor
        \State \Return $\{RC_{1}, \dots, RC_{N}\}$
    \end{algorithmic}
\end{algorithm}
\doublespacing

The seed country for the reference consumer set is Ireland. The countries included in this set are shown in Figure \ref{fig:harp_rc_map}. Countries outside of the reference consumer set tend to geographically across a north-west to south-east `sweep' centered on the African continent.
\begin{figure}[h!]
    \centering
    \includegraphics[width=\linewidth]{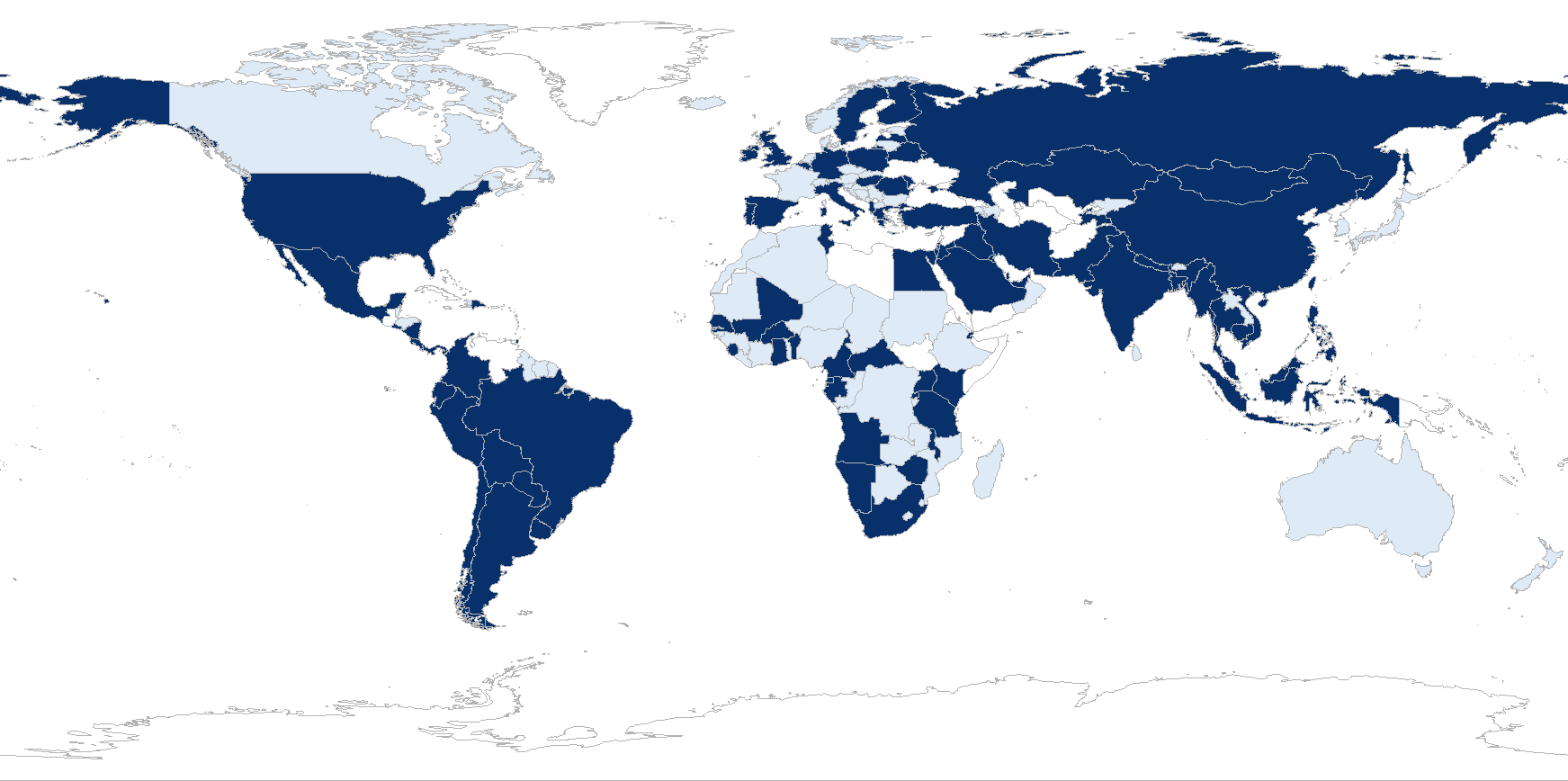}
    \caption{The 103 countries in dark blue are those spanned by the distinguished homothetic reference consumer in the ICP's 2017 reference year. These countries collectively account for 84\% of the world's 2017 aggregate output and population among countries in the sample.}
    \label{fig:harp_rc_map}
\end{figure}
Despite the restrictions on behaviour implied from homotheticity, the chosen reference consumer remains large in that it accounts for 84\% of the world's 2017 aggregate output and population among countries for which there are data. This finding is consistent with the analysis of \textcite{dowrick_true_1997} who find that their entire sample admits a homothetic index, albeit only for the 17 countries that comprised their final sample based on available data at the time.

\textbf{Tightened bilateral bounds under homotheticity}: Multilateral improvements to the upper and lower classical bilateral homothetic bounds (the Paasche and Laspeyres indices) for countries in the reference consumer set are shown at Figure \ref{fig:bound_improv_violin_HARP}. In contrast to the case of non-homothetic tastes in Proposition \ref{prop:nonparam_col_bounds} and Figure \ref{fig:bound_improv_violin_GARP}, the tightening of the bounds is more symmetrically distributed between the classical lower and upper bounds. Additionally, the improvements to the upper cost-of-living bounds (which is the Laspeyres price index in both the homothetic and non-homothetic cases) are larger in magnitude for the homothetic case -- despite there being 64\% fewer country observations in the reference consumer set -- and are no longer concentrated at zero.
\begin{figure}[h!]
    \centering
    \includegraphics[width=\linewidth]{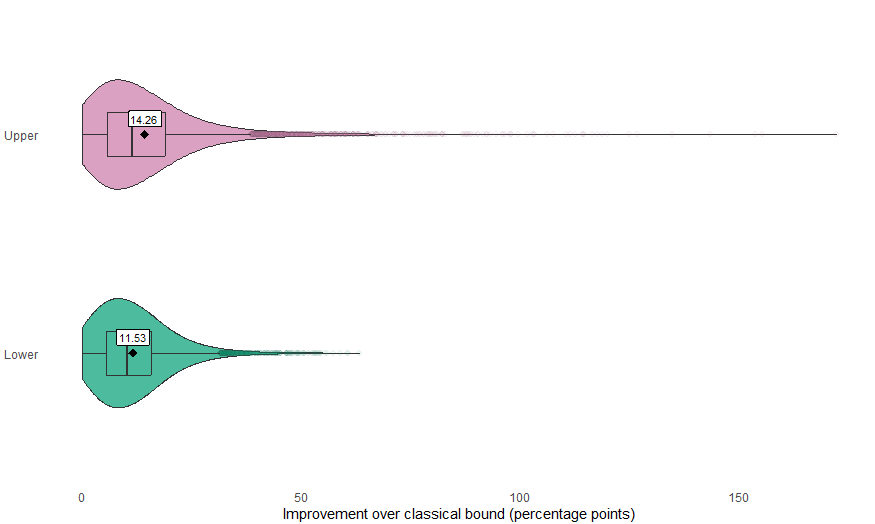}
    \caption{Improvements between the classical bilateral and non-parametric multilateral bounds under homothetic preferences for the reference consumer shown in Figure \ref{fig:harp_rc_map}. \textbf{Note}: bound improvements are defined in the same way as in Figure \ref{fig:bound_improv_violin_GARP}, but the classical bilateral lower bound under homotheticity is the Paasche price index instead of the minimum price relative; the classical bilateral upper bound remains the Laspeyres price index.} 
    \label{fig:bound_improv_violin_HARP}
\end{figure}

\textbf{Appraised index error rates under homotheticity}: Figure \ref{fig:index_error_chart_HARP} repeats the appraisal of major indices based on error rates and magnitudes under homothetic preferences. Many of the findings from the non-homothetic case continue to apply when demanding homothetic tastes for the reference consumer: namely, error rates and magnitudes exhibit the same pattern between the appraised indices (all of which outperform market exchange rates); violation frequencies are lower for countries that are not used to define the reference consumer's preferences, and the magnitude of violations continues to remain small in practical terms (though is modestly higher than the magnitudes for the non-homothetic case). In contrast, there are two major differences between the rates and magnitudes shown in Figures \ref{fig:index_error_chart_HARP} and \ref{fig:index_error_chart_GARP}. First, bound violations occur with much greater frequency for all the considered indices under homothetic tastes compared to the non-homothetic case, which likely reflects the more demanding restrictions on behaviour that homotheticity requires. Second, the violations are distributed more symmetrically between the upper and lower bounds, particularly so for countries within the reference consumer set. This symmetry mirrors the similar pattern seen in the distribution of bound tightening in Figure \ref{fig:bound_improv_violin_HARP} and implies that under the chosen homothetic reference consumer, there is no longer a systematic under or over-estimation of real income or prices using existing methods. 

With respect to error rates and magnitudes under homotheticity, a final note is warranted about the interpretation of these statistics with respect to the Fisher and GEKS indices. The fact that a fifth of the bilateral index numbers in each of these cases lie outside the permitted bounds may at first glance appear inconsistent with the fact that the Fisher and GEKS indices approximate an underlying homothetic utility function. The explanation for this is that despite being homothetic, these approximated preferences are not the same as the one that defines the reference consumer's tastes.
\begin{figure}[h!]
    \centering
    \includegraphics[width=\linewidth]{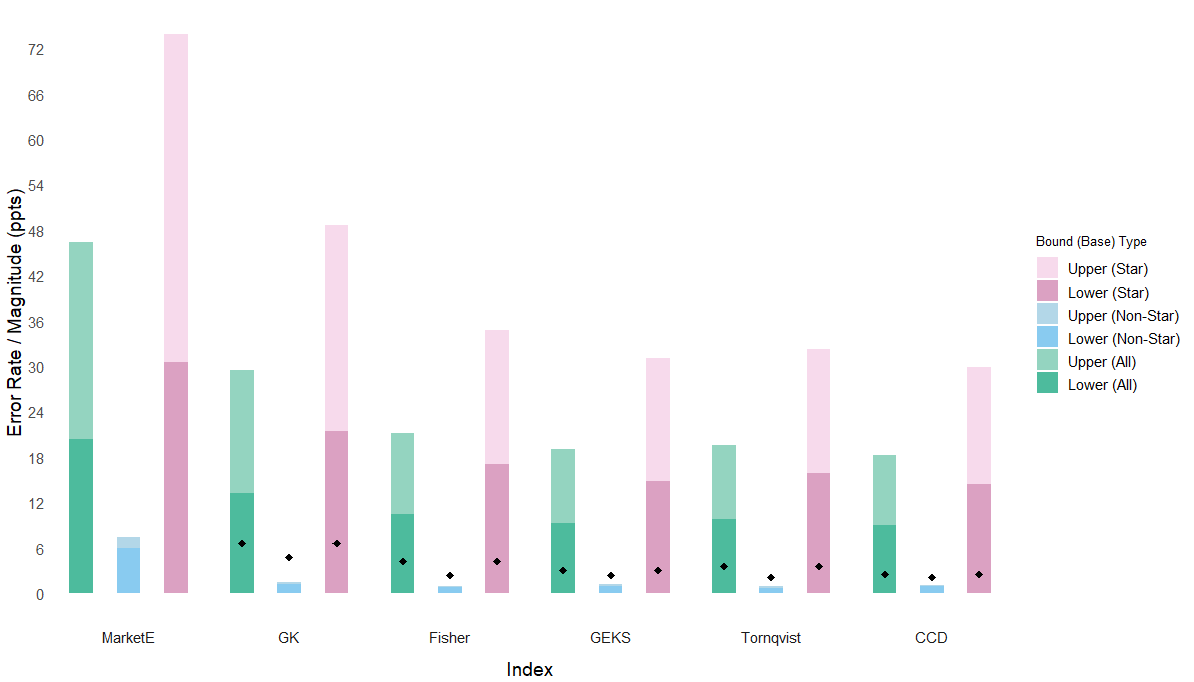}
    \caption{The average error rate $\varepsilon$ (coloured bars) and mean error magnitude $|\varepsilon|$ (black diamonds) for the five considered indices and market exchange rates based on the reference consumer tastes under homothetic preferences.} 
    \label{fig:index_error_chart_HARP}
\end{figure}

\newpage
\subsection{The relationship between the generalised star system under homotheticity, Afriat's Ideal Index, and the Fisher index when $N=2$} \label{subsec:afriat_ideal_proof}

Recall that the generalised star system in the non-homothetic case, $\sqrt{\mathcal{M}^{-}_{ij}\mathcal{M}^{+}_{ij}}$, is the geometric mean between between the lower and upper cost-of-living bounds for data which satisfy Proposition \ref{prop:ref_consumer_characterisation}.

The Ideal Afriat Index of \textcite{dowrick_true_1997} also takes the geometric mean but replaces $\mathcal{M}^{-}_{ij}$ and $\mathcal{M}^{-}_{ij}$ with their analogous bounds under homotheticity, which are sometimes known in the literature as Generalised Paasche and Laspeyres price indices (e.g. \textcite[p.~916]{manser_analysis_1988}). In the case of $N=2$, these bounds simply become the bilateral Paasche and Laspeyres price indices, which is a fact that immediately follows by inspecting the homothetic equivalent to Proposition \ref{prop:ref_consumer_characterisation} and CEWEC, also known as the Homothetic Axiom of Revealed Preference. Specifically, notice that by \eqref{eq:harp_weight_product}, for two countries $i$ and $j$ the restrictions are:
\begin{align*}
    w_{ij} & \geq 1 \\
    w_{ji} & \geq 1.
\end{align*}
Recalling that the weights are Laspeyres price indices and combining these inequalities gives the homothetic bounds and taking the geometric mean of these in turn gives the Fisher result.
\end{document}